\newtheorem{lemma}{Lemma}
\newtheorem{theorem}{Theorem}
\newtheorem{definition}{Definition}
\newtheorem{corollary}{Corollary}
\newtheorem{assumption}{Assumption}
\newtheorem{remark}{Remark}
\begin{document}
%
\title{Optimal EV Charging Decisions Considering Charging Rate Characteristics and\\ Congestion Effects}
%
%
%
%

\author{Lihui~Yi
        and~Ermin~Wei
\IEEEcompsocitemizethanks{\IEEEcompsocthanksitem The authors are with Northwestern University, Evanston,
IL, 60208. L. Yi is with the Department
of Electrical and Computer Engineering (e-mail: lihuiyi2027@u.northwestern.edu).
 E. Wei is with the Department
of Electrical and Computer Engineering and Department of Industrial Engineering and Management Sciences (e-mail: ermin.wei@northwestern.edu).}
}

\IEEEtitleabstractindextext{%
\begin{abstract}
With the rapid growth in the demand for plug-in electric vehicles (EVs), the corresponding charging infrastructures are expanding. These charging stations are located at various places and with different congestion levels. EV drivers face an important decision in choosing which charging station to go to in order to reduce their overall time costs. {However, existing literature either assumes a flat charging rate and hence overlooks the physical characteristics of an EV battery where charging rate is typically reduced as the battery charges, or ignores the effect of other drivers on an EV's decision making process. In this paper, we consider both the predetermined exogenous wait cost and the endogenous congestion induced by other drivers' strategic decisions, and propose a differential equation based approach to find the optimal strategies. We analytically characterize the equilibrium strategies and find that co-located EVs may make different decisions depending on the charging rate and/or remaining battery levels. Through numerical experiments, we investigate the impact of charging rate characteristics, modeling parameters and the consideration of endogenous congestion levels on the optimal charging decisions. Finally, we conduct numerical studies on real-world data and find that some EV users with slower charging rates may benefit from the participation of fast-charging EVs.} 
\end{abstract}

\begin{IEEEkeywords}
Electric vehicles, charging stations, varying charging rate, endogenous congestion cost, differential equation.
\end{IEEEkeywords}}

\maketitle

\IEEEdisplaynontitleabstractindextext

%
\IEEEpeerreviewmaketitle

\section{Introduction}

%
%
%
%
\IEEEPARstart{T}{he} federal government aims to achieve a goal of making $50\%$ of new vehicles sold in the U.S. by 2030 be electric vehicles (EVs). To achieve this, they plan to establish a convenient and equitable network of 500,000 chargers to make EVs accessible for both local and long-distance trips \cite{2030 goal}. As EVs become more widely adopted, there has been significant research into designing efficient charging infrastructures. For example, in \cite{EV parking conference} and \cite{EV parking journal}, the authors study the impact of government policies on firms' investment in EV charging infrastructures. In \cite{battery swapping}, the authors design an online algorithm to assign service stations in response to EVs' battery swapping requests, based on the location of requesting EVs and the availability of fully-charged batteries at service stations in the system.
    
    On the EV side, every manufacturer builds a Battery Management System (BMS) in order to keep the battery healthy and increase its longevity. The BMS captures real time information of the battery (e.g., temperature, charging level), negotiates with the charger for proper voltage, current and charging rate \cite{BMS}. 
    In \cite{battery behavior}, researchers study battery behavior from the empirical charging data. Due to the physical attributes of a battery,  
    the charging rate slows down as the battery level becomes higher \cite{battery model}. 
    In particular, the U.S. Department of Transportation states that when the battery gets close to full, it can take about as long to charge the last 10 percent of an EV battery as the first 90 percent \cite{charging rate at 90 percent}.
    
    This work is related to the growing literature on control and optimization of EVs \cite{EV charging coordination, Ridesharing system, scheduling algorithm for ACN, eco-driving control, V2G}, especially those studying the placement of EV charging stations and  the interactions between charging stations and drivers \cite{optimization model, model with EV bus, model with self-interested EVs, model with various networks}. The work in \cite{optimization model} proposes an optimization model to formulate the EV charging station placement problem, analyzes its complexity and proposes several solution methods. In \cite{model with EV bus}, the authors consider the problem of placing EV charging stations at selected bus stops, to minimize the total installation cost. In \cite{model with self-interested EVs}, to capture the competitive and strategic charging behaviors of EV users, a model considering environment factors such as queuing conditions in charging stations is proposed. In \cite{model with various networks}, researchers propose a multi-stage charging station placement strategy that takes into account the interactions among EVs, the road network, and the electric power grid. 
    However, they often ignore EVs' varying charging rate as a function of their charging levels. Based on charging rate curves of commonly seen EVs, such as the Tesla Model 3, the charging speed at high battery levels is generally lower than that at low battery levels \cite{tesla model 3}. 
    \begin{wrapfigure}{l}{0.23\textwidth}
        \includegraphics[width=1.1\linewidth]{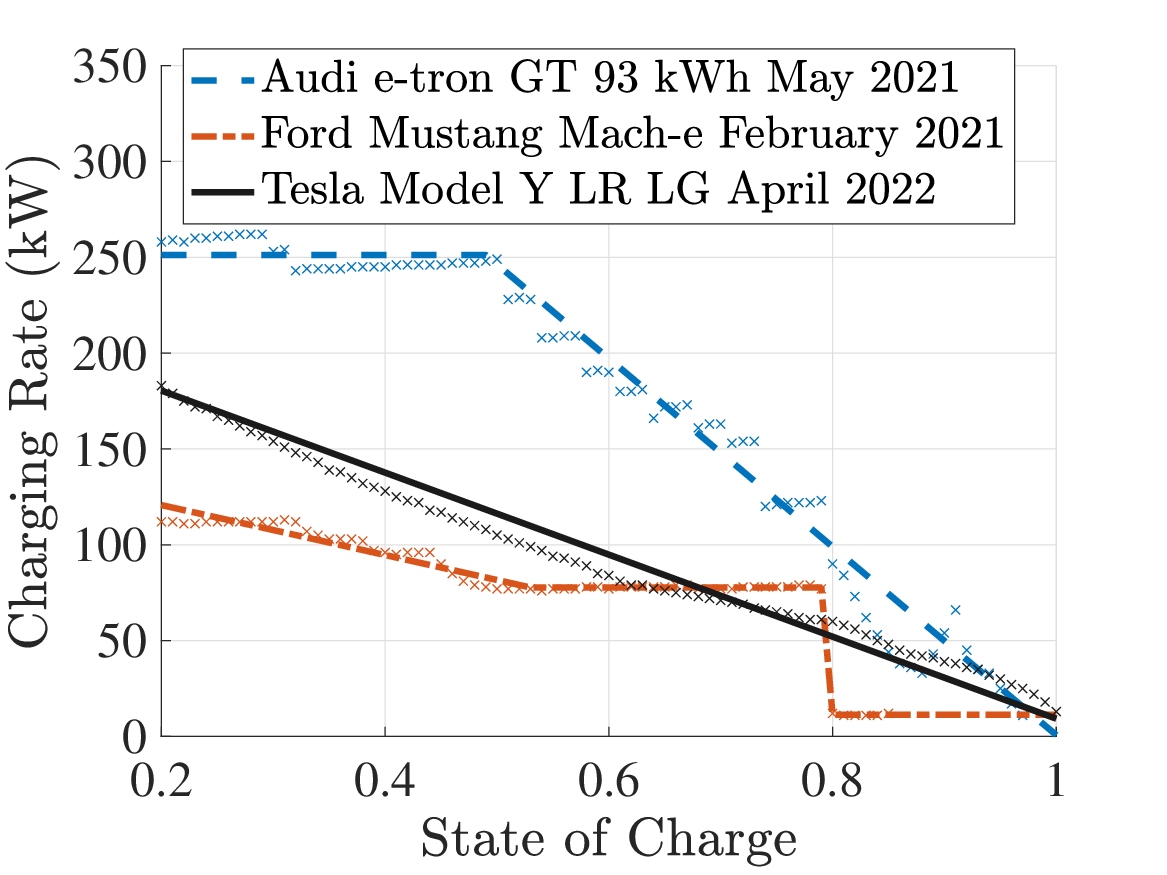}
      \caption{EV charging rate curves of 3 specific makes and models.}
      \label{field data charging rates}
    \end{wrapfigure}
    Field data of EV charging rates collected by Bjørn Nyland also shows such phenomenon \cite{field data}. In particular, we take the Audi e-tron GT, Ford Mustang Mach-e and Tesla Model Y LR LG as examples for demonstration. Fig. \ref{field data charging rates} shows the raw data points and the corresponding charging rate curves we derived using curve fitting techniques.

In this paper, we model and study the decision making process of EV drivers when choosing between two charging stations that are situated in different locations and have varying congestion levels. Our study takes into account the effects of varying charging rates of electric vehicles. Our stylized model is inspired by the classical Hotelling's model \cite{hotelling model}, where customers  choose between two firms based on spatial proximity and firms choose their locations accordingly. In our case, two charging stations are situated at two extreme points on a line segment, each with a waiting time determined exogenously and/or endogenously. EVs with different battery levels and locations must decide which charging station to use to minimize their overall time cost, which includes waiting time, traveling time from their current position to the charging station, and charging time. {Our prior work \cite{allerton} considers a special case, where there is no endogenous congestion induced by other drivers' strategic decisions and simply assumes a fixed and predetermined waiting time. The endogenous equilibrium here is challenging due to the dependency between the optimal strategies of one single EV driver and all the other drivers, thus the method in \cite{allerton} cannot be easily extended. To address this problem, we propose a differential equation based approach that involves studying an ordinary differential equation and connecting its solution to the optimal decision making of EV drivers.}

    Our main contribution is the introduction of charging time function, which captures the varying levels of charging rates and the study of their impact on selecting the optimal charging station. We demonstrate that optimal strategies for EVs, given varying charging rates, differ from those with constant rates. {Moreover, we model both exogenous and endogenous waiting time costs and analyze the corresponding equilibrium. Finally, numerical and field data experiments illustrate the optimal strategies for EV drivers in various scenarios.}

    {The remaining sections of this paper are organized as follows. In Section \ref{model}, we introduce our model. Section \ref{Optimal strategies} analyzes the optimal strategies for EV drivers under varying charging rates. Section \ref{sec: decreasing charging rates} considers the scenario when charging rates are non-increasing. In Section \ref{numerical exogenous model} and \ref{numerical endogenous model}, we present numerical case studies for exogenous model and endogenous model, respectively. In Section \ref{heterogeneous}, we discuss the heterogeneous case where various charging rate characteristics are considered together in one single game. Finally, Section \ref{conclusion} concludes the paper.}

\section{Model}\label{model}
    In this section, we present our model and define some key terms.
    
    \subsection{Charging Stations}
        We assume there are two EV charging stations, 
        identified by Station A and Station B. Suppose the location of Station A is $0$ and the location of Station B is normalized to be $1$, which are two extreme points of the space we are interested in. Both charging stations offer the same product, namely electricity, to customers located on a line segment of $[0, 1]$. 
        
    \subsection{EV Drivers}
        We assume there is a continuum  of EV drivers located on the line segment between two charging stations and each EV driver wishes to charge at one of them. The location of an EV is denoted by $y \in [0, 1]$. 
        
        Each EV may have a different remaining battery level $r \in [c, r_{t}]$, where $0 < c < r_{t} \le 1$. Here, constants $c$ and $r_{t}$ represent the power consumption to travel one unit distance and the target battery capacity level that all EVs wish to charge to, respectively. We assume the values of $c$ and $r_t$ are the same for all EVs. By letting EVs' battery levels be greater than or equal to $c$, all EV drivers can reach both charging stations regardless of their positions. 
        
        Each EV driver is characterized by the parameter pair $(r, y)$. We consider the EV drivers in the region $R$ corresponding to the product space of $[c, r_{t}]\times [0, 1]$ with $r \in [c, r_{t}]$ and $y \in [0, 1]$. These drivers need to decide simultaneously which charging stations (A or B) to charge at to minimize their {overall} time costs. 
        
        \subsection{Cost Functions} The overall time cost function is comprised of 3 components: waiting time, traveling time and charging time.

        {
        \subsubsection{Waiting time}
            Waiting time refers to the time cost of waiting at Station A or B. Here, we introduce two separate models for  waiting time: exogenous and endogenous.

            In the \textit{exogenous  model}, waiting times at Station A and B are denoted by two non-negative constants $w_{A}^{x}$ and $w_{B}^{x}$, respectively. This simple model was initially proposed for tractability purposes in our preliminary work \cite{allerton}, which assumes the waiting times to be independent of EV drivers' strategies. This assumption is reasonable when the drivers make their decisions without accounting for the other drivers. 

            In the \textit{endogenous model}, we consider a more realistic scenario that waiting times vary with the number of drivers who choose a particular charging station. That is to say, the strategies of EV drivers may cause congestion and affect the value of waiting times. 
            
            More formally, we model the congestion  as a function of the proportion of EVs that opt for a particular charging station. The more EVs choose Station A (or B), the less desirable it is on average for the other EVs. Since we assume all EV drivers make their one-shot decisions simultaneously, they all see the same congestion level. Define $\epsilon$ ($\epsilon > 0$) as the congestion level when all EV drivers opt for charging station A or B. We then follow the idea in \cite{linear congestion level} and model the congestion at Station A and B as two linear functions $\epsilon \alpha$ and $\epsilon (1-\alpha)$, respectively, where $\alpha$ ($0 \le \alpha \le 1$) is the proportion of EVs that choose Station A as their optimal strategy. 
            
            For the endogenous model, we add the congestion time to the exogenous waiting time  and define the total waiting time at Station A (B) as 
            \begin{align*}
                w_A^{n} = w_{A}^{x} + \epsilon\alpha,
            \qquad
                w_B^{n} = w_{B}^{x} + \epsilon(1-\alpha).
            \end{align*}
            }
            
        \subsubsection{Traveling time}
            Traveling time cost refers to the time taken to reach the selected charging station. We simplify the model by assuming all EVs travel at the same speed. We denote $\tau > 0$ as the unit distance traveling time or the total time taken to travel from Station A (B) to Station B (A). For an EV driver with the parameter pair $(r, y)$, 
            their traveling time to Station A (B) is
            \begin{align*}
                \tau y,
            \qquad
                \tau (1-y).
            \end{align*}
            
        \subsubsection{Charging time}
            {
            To model different battery sizes, we use $E$ ($E>0$) to denote the battery capacity of the EVs and $P: [0,1] \to \mathbb{R^+}$ as the charging rate which is a function of the state-of-charge (SoC), $r$. {Here, we state our assumption on $P$ below.}

            { \begin{assumption}
                We consider charging rate function $P$ to be continuous in $[0, 1]$.
            \end{assumption}

            }
            
            Next, we define the charging time $F(r)$ as the time  required to charge an EV from the current battery level $r$ to the target battery capacity level $r_{t}$, i.e.,
            \begin{align*}
                F(r) = \int_{r}^{r_t} \frac{E}{P(\gamma)}d\gamma.
            \end{align*}
           We observe that $F$ is a continuously differentiable and strictly decreasing function \cite{rudin}. Note that different make and model of EVs may have different charging rates and charging time functions. For simplicity, we first assume all EVs have the same charging characteristics. We will discuss the case of heterogeneous EVs with different $P$ and $F$ functions in Section \ref{heterogeneous}.
            }

            For an EV located at $y$ with remaining battery level $r$, its charging time at Station A is
            $ F(r-cy),$       
            where $r-cy$ represents the remaining battery level after driving to Station A. Similarly, the charging time at Station B is
            $F(r-c(1-y))$,
            where $r-c(1-y)$ is the remaining battery level upon arrival at Station B.
            
        \subsubsection{Total time cost}
            The total time cost is the sum of waiting time, traveling time and charging time. Specifically, the total time cost when EVs choose Station A and Station B is written as
            \begin{align*}
                &T_{A}^{i}(r, y) = w_{A}^{i} + \tau y + F(r-cy),\\
                &T_{B}^{i}(r, y) = w_{B}^{i} + \tau (1-y) + F(r-c(1-y)),
            \end{align*}
            where $i=x$ represents the exogenous model and $i=n$ represents the endogenous model.

            To understand how EV drivers make decisions, we define 
            $$\Delta T^{i}(r, y) = T_{B}^{i}(r, y) - T_{A}^{i}(r, y).$$
            Then, for EVs with parameter pair $(r,y)$, three possible cases may arise. If $\Delta T^{i}(r, y) > 0$, they should choose Station A. If $\Delta T^{i}(r, y) < 0$, they should choose Station B. If $\Delta T^{i}(r, y) = 0$, they are indifferent between the two charging stations. We refer to this last group as the {\it indifferent drivers}.

\section{Optimal Strategies for EV Drivers}\label{Optimal strategies}
    In this section, we study the optimal strategy for an EV driver associated with the parameter pair $(r, y)$. We address the problem by first characterizing the existence and properties of the set of indifferent drivers. Then, the optimal strategies for other EVs follow naturally. 
    
    The set of indifferent drivers is those within region $R$ that have equal costs to both stations.   Our preliminary work \cite{allerton}  establishes conditions for the existence of these drivers under specific assumptions.  In this work, we take an alternative approach, where we first extend the region from $R:[c, r_t]\times [0,1]$ to $(-\infty, \infty)\times(-\infty,\infty)$. To do so, we extend  the domain of function  $F$  from $[0, r_t]$  to $(-\infty, \infty)$, such that we preserve the continuously differentiable and strictly decreasing properties.  Then we introduce the ``extended'' indifferent drivers as follows. 
     \begin{definition}[{Extended indifferent driver/indifference curve}]
       Any point $(r, y)$ in $\mathbb{R}^2$ with $\Delta T^{i}(r, y) = 0$ is referred to as an extended indifferent driver.  The set of extended indifferent drivers is referred to as the extended indifference curve.
    \end{definition}
    
    With this definition,  an indifferent driver exists whenever the intersection of $R$ and the extended indifference curve is nonempty.  This yields a unified analysis framework for both  exogenous and endogenous waiting times; and both cases where indifferent drivers are present and absent.

    To solve the extended indifference curve, we let $\Delta T^{i}(r, y) = 0$, yielding
    \begin{align}\label{time difference}
        \psi (r, y) = w_{A}^{i} - w_{B}^{i} -\tau,
    \end{align}
    where we define
    $$\psi (r, y) = F(r-c+cy) - F(r-cy) - 2\tau y,$$ and $i = x, n$ represents the exogenous and endogenous models, respectively.
    
    Regarding the preceding equation, two questions arise:
    \begin{enumerate}[label=\textbf{Q\arabic*.}, ref=Q\arabic*]
        \item Does there exist a solution in $\mathbb{R}^2$? \label{q1}
        \item How to solve the equation? \label{q2}
    \end{enumerate}
    We observe that the LHS of equation \eqref{time difference} depends only on the charging characteristics and modeling parameters, while the RHS is dependent on    different waiting time models.

    For exogenous model, the answer to question \ref{q1} is not obvious, but for question \ref{q2} is straightforward. Equation \eqref{time difference} can be written as $\psi (r, y) = w_{A}^{x} - w_{B}^{x} -\tau$. As the RHS is a constant independent of other EV drivers' decisions, for each $r$, we can search for a solution of $y$. 

    For endogenous model, both question \ref{q1} and \ref{q2} are hard to answer. In this case, equation \eqref{time difference} is written as $\psi (r, y) = w_{A}^{x} - w_{B}^{x} -\tau - \epsilon + 2\epsilon \alpha$. As the RHS includes the congestion $\alpha$ that is dependent on all EVs' strategies, it is challenging to solve the equation. Specifically, we need to know the congestion $\alpha$ to solve the extended indifference curve. Meanwhile, it is necessary to know the indifference curve before calculating the congestion  $\alpha$. This dependency or coupling effect creates the challenge of solving equation \eqref{time difference} for endogenous model.
    
    Motivated by addressing this dependency and unifying two waiting time models, we propose an equivalent reformulation using ordinary differential equation (ODE). The idea is as follows. Since we assume all EV drivers see the same congestion $\alpha$, the RHS of equation \eqref{time difference} remains the same as $r$ varies. Then, by taking the derivatives of both sides of equation \eqref{time difference} with respect to $r$, the term involving $\alpha$ goes away and thus the aforementioned ``coupling effect''  disappears. Therefore, for an extended indifferent driver, we find $y$ as a function of $r$ by studying the ODE as follows,
    \begin{align}\label{ODE}
        \Big( cF'(r-c+cy) + cF'(r-cy) - 2\tau \Big)\frac{dy}{dr} \nonumber\\
        = - \Big( F'(r-c+cy) - F'(r-cy) \Big).
    \end{align}
    By construction, we set the derivative of $\psi$ to $0$ and see that the general solution of the preceding ODE has the form $\psi(r,y) = k$, where $k$ is an arbitrary constant. Combining with an appropriate initial condition, equation \eqref{time difference} can be solved. 

    {For the remainder of this section, we first present preliminary results. Then, Theorem \ref{exogenous optimal strategy}-\ref{endogenous optimal strategy} prove the existence of extended indifference curve and give the appropriate initial conditions to ODE in \eqref{ODE} for exogenous and endogenous models, respectively.}

\subsection{Preliminaries}
In this {subsection}, we provide some basic properties related to the previous ODE, which can be used in both exogenous and endogenous models.

   Lemma \ref{monotonicity} shows the monotonicity of function $\psi$. Lemma \ref{ODE general solution property} proves that the general solution of ODE in \eqref{ODE} is a well-defined  function. In Lemma \ref{IVP solution}, we formally introduce the initial value problem \eqref{IVP} and then characterize the solutions to such problem. {Lemma \ref{lemma-limit existence}-\ref{continuity of g(r,z)}} further describe some properties regarding the general solution of ODE. 
    
\begin{lemma}\label{monotonicity}
        Given any $r \in \mathbb{R}$, $\psi(r, \cdot)$ is strictly decreasing in $\mathbb{R}$.
    \end{lemma}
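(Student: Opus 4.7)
The statement is a straightforward monotonicity claim, so the plan is to compute $\partial \psi / \partial y$ directly and show it is strictly negative for every $(r,y) \in \mathbb{R}^2$.

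First, I would apply the chain rule to the definition $\psi(r,y) = F(r-c+cy) - F(r-cy) - 2\tau y$, which yields
\begin{equation*}
\frac{\partial \psi}{\partial y}(r,y) \;=\; c\,F'(r-c+cy) \;+\; c\,F'(r-cy) \;-\; 2\tau.
\end{equation*}
Here I use that $F$ is continuously differentiable on all of $\mathbb{R}$ thanks to the extension of its domain mentioned just before the lemma.

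Next, I would use the sign of $F'$. From $F(r) = \int_r^{r_t} E/P(\gamma)\,d\gamma$ and the fundamental theorem of calculus we get $F'(\gamma) = -E/P(\gamma)$, which is strictly negative since $E>0$ and $P(\gamma)>0$. Because $c>0$ and $\tau>0$, each of the three terms in the expression above is strictly negative, hence
\begin{equation*}
\frac{\partial \psi}{\partial y}(r,y) \;<\; -2\tau \;<\; 0
\end{equation*}
for every $(r,y)$. Integrating along any segment $y_1 < y_2$ with $r$ fixed, or simply invoking the standard fact that a function with a strictly negative derivative is strictly decreasing, gives $\psi(r,y_1) > \psi(r,y_2)$, which is the desired conclusion.

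There is no real obstacle: the only subtle point is making sure the derivative computation is valid on the extended domain of $F$, but this is granted by the earlier assumption that the extension preserves continuous differentiability and strict monotonicity of $F$. I would state this once at the beginning of the proof so the reader sees why we may freely differentiate for arbitrary real arguments $r-c+cy$ and $r-cy$, and then the rest is a one-line sign check.
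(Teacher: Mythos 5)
Your proof is correct and follows essentially the same route as the paper's: compute $\partial \psi/\partial y = cF'(r-c+cy)+cF'(r-cy)-2\tau$, observe that $F' \le 0$ and $\tau>0$ force this to be strictly negative, and conclude. The only (harmless) extra care needed is that on the extended domain strict decrease of $F$ gives $F'\le 0$ rather than $F'<0$, so the clean bound is $\partial\psi/\partial y \le -2\tau < 0$, which still yields the conclusion.
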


    \begin{proof}
        Since $F$ is differentiable, we observe that $\psi(r, y)=F(r-c+cy) - F(r-cy) - 2\tau y$ is also differentiable. Then, we take its partial derivative with respect to $y$ and obtain
        \begin{align*}
            \frac{\partial \psi(r, y)}{\partial y} = cF'(r-c+cy) + cF'(r-cy) - 2\tau.
        \end{align*}
        With $F$ being strictly decreasing, its first order derivative is negative and so on we have $\partial \psi(r, y) / \partial y < 0$ for any $y \in \mathbb{R}$. Hence, $\psi(r, \cdot)$ is strictly decreasing.
    \end{proof}

    Recall that the general solution to ODE in \eqref{ODE}, if exists, is given by $\psi(r,y) = k$, where $k$ is an arbitrary constant.  In the following lemma, we show the existence and uniqueness of such a solution.
    
    \begin{lemma}\label{ODE general solution property}
        For any constant $k \in \mathbb{R}$, there exists one and only one value $y \in \mathbb{R}$ such that $\psi(r,y) = k$ for any $r \in \mathbb{R}$.
    \end{lemma}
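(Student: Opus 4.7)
The plan is to establish surjectivity and injectivity of the map $y \mapsto \psi(r,y)$ for each fixed $r$. Injectivity is immediate from Lemma \ref{monotonicity}, since any strictly decreasing function is one-to-one, so at most one $y$ can satisfy $\psi(r,y)=k$. The remaining task is existence, for which I intend to apply the intermediate value theorem after establishing the limiting behavior of $\psi(r,\cdot)$ at $\pm\infty$.

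For continuity of $\psi(r,\cdot)$, I would appeal to the fact that $F$ is continuously differentiable on $(-\infty,\infty)$ (after the extension introduced just before Definition~1), so the composition $\psi(r,y)=F(r-c+cy)-F(r-cy)-2\tau y$ is continuous in $y$. For the limits, the key observation is that the linear term $-2\tau y$ dominates, even if the extended $F$ does not itself diverge at $\pm\infty$. More precisely, because $F$ is strictly decreasing, for $y<1/2$ we have $r-c+cy<r-cy$, hence $F(r-c+cy)-F(r-cy)>0$; this gives $\psi(r,y)>-2\tau y\to+\infty$ as $y\to-\infty$. Symmetrically, for $y>1/2$ we have $r-c+cy>r-cy$, hence $F(r-c+cy)-F(r-cy)<0$; this gives $\psi(r,y)<-2\tau y\to-\infty$ as $y\to+\infty$.

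With continuity and these two limits in hand, for any $k\in\mathbb{R}$ the intermediate value theorem produces some $y\in\mathbb{R}$ with $\psi(r,y)=k$, and strict monotonicity (Lemma \ref{monotonicity}) ensures this $y$ is unique. This completes both parts of the claim.

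I expect the only subtle step to be the limit argument, because the paper's extension of $F$ is only assumed to preserve continuous differentiability and strict decrease; in particular $F$ might be bounded on $\mathbb{R}$. The clean way around this is the sign argument above, which exploits only the monotonicity of $F$ (not any divergence) together with the sign of $r-c+cy-(r-cy)=c(2y-1)$, letting the $-2\tau y$ term carry $\psi(r,y)$ to $\pm\infty$ on its own. Everything else is routine.
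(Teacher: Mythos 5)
Your proposal is correct and follows essentially the same route as the paper: uniqueness from the strict monotonicity in Lemma \ref{monotonicity}, and existence via the intermediate value theorem after showing $\psi(r,y)\to\mp\infty$ as $y\to\pm\infty$. Your sign argument for the limits (using only that $F$ is decreasing, so the $-2\tau y$ term alone forces divergence) is in fact a more careful justification than the paper's, which simply asserts the limits from the continuity and strict decrease of $F$.
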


    \begin{proof}
        To show the existence,
        we observe that $F$ is continuous and strictly decreasing, we then have for all $r$, $\lim_{y \to \infty} \psi(r,y) = - \infty < 0$, $\lim_{y \to -\infty} \psi(r,y) = \infty > 0$. Therefore, by the intermediate value theorem, there exists a value $y \in \mathbb{R}$ such that  $\psi(r,y) = k$.

        The uniqueness of $y$ follows from the strict monotonicity of $\psi(r,\cdot)$ as established in  Lemma \ref{monotonicity}.
    \end{proof}

    
    From Lemma \ref{ODE general solution property}, we learn that the general solution of ODE in problem \eqref{IVP} is given by $\psi(r,y) = k$ with an arbitrary constant $k$. It implicitly defines a function mapping $g$ such that $y = g(r)$. The initial condition is naturally given by that constant $k \in \mathbb{R}$. However, this form of initial condition does not give any insight about the extended indifference curve, we here introduce an alternative initial condition $y(r_t) = z$ with $z \in \mathbb{R}$. For the rest of the paper, we focus on this formulation.

    \begin{lemma}\label{IVP solution}
        Consider the following initial value problem
        \begin{align*}\tag{IVP}\label{IVP}
            &\Big( cF'(r-c+cy) + cF'(r-cy) - 2\tau \Big)\frac{dy}{dr} \nonumber\\
            &= - \Big( F'(r-c+cy) - F'(r-cy) \Big),\\
            &y(r_t) = z.
        \end{align*}
        The initial conditions $y(r_t)=z$ with $z \in \mathbb{R}$ and  $\psi(r,y)=k$ with $k \in \mathbb{R}$ are equivalent.
        
    \end{lemma}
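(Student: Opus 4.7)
The plan is to exhibit a one-to-one correspondence between the parameter $z$ in the initial condition $y(r_t)=z$ and the parameter $k$ in the level-set form $\psi(r,y)=k$, and to verify that matched parameters pick out the same integral curve of the ODE in \eqref{IVP}. Intuitively, $\psi(r,y)=k$ is the general solution already parametrized by $k$, and $y(r_t)=z$ is the standard pointwise initial condition; the claim is that these are just two different labels for the same family of solutions.

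First, I would define the map $\Phi:\mathbb{R}\to\mathbb{R}$ by $\Phi(z)=\psi(r_t,z)$. By Lemma \ref{monotonicity}, $\psi(r_t,\cdot)$ is strictly decreasing, so $\Phi$ is injective. By Lemma \ref{ODE general solution property} applied at $r=r_t$, for every $k\in\mathbb{R}$ there exists some $y=z\in\mathbb{R}$ with $\psi(r_t,z)=k$, so $\Phi$ is surjective. Hence $\Phi$ is a bijection, establishing the candidate correspondence $k=\psi(r_t,z)$.

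Next, I would verify that matched parameters describe the same solution curve of \eqref{IVP}. In the direction $y(r_t)=z \Rightarrow \psi(r,y)=k$: since the general solution of the ODE has the form $\psi(r,y)=k$, the particular solution passing through $(r_t,z)$ lies on the level set through that point, forcing $k=\psi(r_t,z)$. In the reverse direction $\psi(r,y)=k \Rightarrow y(r_t)=z$: evaluating the identity $\psi(r,y(r))=k$ at $r=r_t$ gives $\psi(r_t,y(r_t))=k$, and by the uniqueness part of Lemma \ref{ODE general solution property}, the value $y(r_t)$ is the unique real number $z$ solving $\psi(r_t,z)=k$. Combining both directions with the bijectivity of $\Phi$ shows the two initial conditions are equivalent.

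There is no real obstacle here; the argument is essentially a reparametrization of the integral curves, and both existence and uniqueness needed for the correspondence are supplied by Lemma \ref{ODE general solution property}, with monotonicity of $\psi(r_t,\cdot)$ from Lemma \ref{monotonicity} ensuring that $\Phi$ is strictly monotone. The main care point is simply stating clearly which lemma is invoked for each direction and being explicit that the implicit function $y(r)$ defined by $\psi(r,y)=k$ is pointwise well-defined on all of $\mathbb{R}$.
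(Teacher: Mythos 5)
Your proposal is correct and follows essentially the same route as the paper: both establish that $z\mapsto\psi(r_t,z)$ is a bijection of $\mathbb{R}$ (injectivity from the strict monotonicity in Lemma \ref{monotonicity}, surjectivity from the limiting behavior of $\psi(r_t,\cdot)$, which is exactly what Lemma \ref{ODE general solution property} encapsulates) and identify $k=\psi(r_t,z)$. Your explicit check that matched parameters select the same integral curve is a small but welcome addition that the paper leaves implicit.
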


    \begin{proof}

        By Lemma \ref{monotonicity}, $\psi(r_t,z)$ is strictly decreasing in $z \in \mathbb{R}$. Hence, $\psi(r_t, \cdot)$ is a one-to-one mapping function. Since $\lim_{z \to \infty} \psi(r_t,z) = -\infty$ and $\lim_{z \to -\infty} \psi(r_t,z) = \infty$, function $\psi(r_t, \cdot)$ maps from $\mathbb{R}$ to $\mathbb{R}$. Therefore, every value of $z \in \mathbb{R}$ corresponds to a distinct value of $k \in \mathbb{R}$ with $k = \psi(r_t, z)$ and vice versa. Hence, we conclude that the two initial conditions $\psi(r, y) = k$ and $y(r_t) = z$ are equivalent. 
    \end{proof}

The preceding results suggest that there is a unique solution $y = g(r)$, where $g$ is a function well defined in $r \in \mathbb{R}$ and is determined implicitly by $\psi(r,y) = \psi(r_t,z)$. We note that we may use any vertical slice $y(\cdot)$ as the initial condition of problem \eqref{IVP}. We will discuss the reason of choosing $y(r_t)$ in Remark \ref{remark: selection of initial condition}.

 It is worth noting that the more general Picard's existence and uniqueness theorem for  first order initial value problems is not applicable, as it requires additional smoothness assumption on function $F$ \cite{Picard}.

    \begin{corollary}\label{geometric intuition}
        As the initial value $z$ varies, the curves of any two solutions of problem \eqref{IVP} cannot intersect each other.
    \end{corollary}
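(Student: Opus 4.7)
The plan is to prove the corollary by contradiction, leveraging the fact that each solution of \eqref{IVP} is characterized implicitly by a single level set of $\psi$, as established in Lemma \ref{IVP solution}, together with the strict monotonicity from Lemma \ref{monotonicity}.

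First, I would set up notation: let $z_1 \neq z_2$ be two distinct initial values and let $y_1(\cdot)$ and $y_2(\cdot)$ be the corresponding unique solutions of \eqref{IVP} guaranteed by Lemma \ref{ODE general solution property} and Lemma \ref{IVP solution}. By Lemma \ref{IVP solution}, each $y_j$ is implicitly defined by $\psi(r, y_j(r)) = k_j$, where $k_j = \psi(r_t, z_j)$ for $j=1,2$. Because $\psi(r_t, \cdot)$ is strictly decreasing in its second argument by Lemma \ref{monotonicity}, the map $z \mapsto \psi(r_t, z)$ is injective, so $z_1 \neq z_2$ implies $k_1 \neq k_2$.

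Next, I would assume for contradiction that the two solution curves intersect at some point $(r^*, y^*) \in \mathbb{R}^2$, i.e., $y_1(r^*) = y_2(r^*) = y^*$. Evaluating the implicit characterization at $r^*$ yields $\psi(r^*, y^*) = k_1$ and $\psi(r^*, y^*) = k_2$, forcing $k_1 = k_2$, which contradicts $k_1 \neq k_2$. Hence the curves cannot intersect.

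I do not anticipate a serious obstacle: the corollary is essentially a rephrasing of the well-definedness of the implicit mapping $y=g(r)$ that was already established, and the contradiction argument is one line once the equivalence of initial conditions and the strict monotonicity of $\psi$ in its second argument are in hand. The only subtlety worth flagging is that this non-intersection property is global in $r \in \mathbb{R}$ because both the existence/uniqueness in Lemma \ref{ODE general solution property} and the monotonicity in Lemma \ref{monotonicity} hold on all of $\mathbb{R}$, so no local-versus-global distinction needs to be made.
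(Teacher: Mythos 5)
Your proof is correct and follows essentially the same route as the paper: both argue by contradiction using the level-set characterization $\psi(r,y)=k$ and the injectivity of $z\mapsto\psi(r_t,z)$ from Lemma \ref{monotonicity}, so that an intersection point would force two distinct level constants to coincide. Your version just spells out explicitly the one-line uniqueness argument the paper gives.
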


    \begin{proof}
         By Lemmas \ref{ODE general solution property}-\ref{IVP solution}, if any two solutions intersect, there will be two solutions that satisfy problem \eqref{IVP} with the same initial condition corresponding to the intersection point, which contradicts to the uniqueness.
    \end{proof}

    By regarding the initial value $z$ as an independent variable, the solution of problem \eqref{IVP} is also a function of $z$. Therefore, for the rest of the paper, we abuse the notation and write $y$ as a function of both $r$ and $z$, i.e., $y = g(r, z)$. {In Lemma \ref{lemma-limit existence}-\ref{continuity of g(r,z)}, we show that $g(r, z)$ is continuous in $z$.}

    \begin{lemma}\label{lemma-limit existence}
        With a fixed $r \in \mathbb{R}$,  
        {$g(r, \cdot)$ is non-decreasing} and $\lim_{z \to a} g(r, z)$ exists for any $a \in \mathbb{R}$.
    \end{lemma}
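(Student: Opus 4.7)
The plan is to establish the two assertions in turn. Monotonicity will follow directly from the non-crossing property of solution curves (Corollary \ref{geometric intuition}), and the existence of the limits will then be a routine consequence of monotonicity, requiring no further analysis of the ODE itself.

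For monotonicity, I would fix $z_1 < z_2$ in $\mathbb{R}$ and consider the two solution curves $r \mapsto g(r, z_1)$ and $r \mapsto g(r, z_2)$ of problem \eqref{IVP}. At $r = r_t$, the initial condition gives $g(r_t, z_1) = z_1 < z_2 = g(r_t, z_2)$. Both curves are continuous in $r$ (being solutions of the ODE), and by Corollary \ref{geometric intuition} they cannot intersect. Therefore the strict inequality must persist for every $r \in \mathbb{R}$, i.e., $g(r, z_1) < g(r, z_2)$, establishing strict monotonicity of $g(r, \cdot)$, which a fortiori yields the non-decreasing property.

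For the existence of the limit, I would exploit the monotonicity just established together with the implicit characterization $\psi(r, g(r,z)) = \psi(r_t, z)$ from Lemma \ref{IVP solution}. For any $a \in \mathbb{R}$, pick any $z' > a$ and $z'' < a$; by the non-decreasing property, $g(r, z'') \le g(r, z) \le g(r, z')$ for all $z \in (z'', z')$, so $g(r, \cdot)$ is bounded in a neighborhood of $a$. The one-sided limits $\lim_{z \to a^-} g(r, z) = \sup_{z < a} g(r, z)$ and $\lim_{z \to a^+} g(r, z) = \inf_{z > a} g(r, z)$ therefore exist as finite real numbers by the standard monotone-convergence argument for monotone functions.

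The main obstacle I anticipate is actually rather mild, namely clarifying whether ``$\lim_{z \to a} g(r, z)$ exists'' refers to the two one-sided limits existing or to the full two-sided limit. The argument above cleanly produces the one-sided limits as finite real numbers without invoking any further property of $F$; equality of these one-sided limits (i.e., continuity of $g(r, \cdot)$) appears to be the content of the subsequent Lemma \ref{continuity of g(r,z)}. Accordingly, I would phrase the present proof as establishing existence of the one-sided limits as finite real numbers, and leave the sharper continuity claim to the next lemma.
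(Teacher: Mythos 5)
Your monotonicity argument is correct and matches the paper's in substance: both rest on the initial condition $g(r_t,z)=z$, continuity of solutions in $r$, and the non-crossing property of Corollary \ref{geometric intuition} (the paper runs it as a contradiction via the intermediate value theorem applied to $g(r,z_1)-g(r,z_2)$, you run it directly; same idea).

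The second half has a genuine gap. The lemma asserts that the two-sided limit $\lim_{z\to a} g(r,z)$ exists, and you prove only that the one-sided limits exist, proposing to reinterpret the statement and defer the equality of the one-sided limits to Lemma \ref{continuity of g(r,z)}. That deferral does not work: the paper's proof of Lemma \ref{continuity of g(r,z)} invokes Lemma \ref{lemma-limit transitivity}, whose own justification ("by continuity of $\psi$ and the existence of $\lim_{z\to a} g(r,z)$") presupposes exactly the two-sided limit that the present lemma is supposed to deliver. So leaving the coincidence of the one-sided limits to the later lemma creates a circular dependency rather than a clean division of labor. The missing step is short and uses machinery you already have in hand: suppose $L^-=\lim_{z\to a^-}g(r,z)\neq\lim_{z\to a^+}g(r,z)=L^+$. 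Since $\psi(r,\cdot)$ is continuous and strictly decreasing (Lemma \ref{monotonicity}), it is injective, so $\psi(r,L^-)\neq\psi(r,L^+)$. But the implicit relation $\psi(r,g(r,z))=\psi(r_t,z)$ together with continuity of $\psi(r_t,\cdot)$ forces both one-sided limits of $\psi(r,g(r,z))$ as $z\to a$ to equal $\psi(r_t,a)$, hence $\psi(r,L^-)=\psi(r,L^+)$, a contradiction. This is precisely the argument the paper uses, and it should be included here rather than postponed.
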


    \begin{proof}

        { 
        We first prove the monotonicity. Clearly $g(r, \cdot)$ is non-decreasing when $r = r_t$, as the initial value of problem \eqref{IVP} is $g(r_t, z) = z$. For the case of $r \neq r_t$, we assume the contrary that there exists $z_1, z_2 \in \mathbb{R}$ with $z_2 > z_1$ such that $g(r, z_1) > g(r, z_2)$. Denote $\Delta g(r) = g(r, z_1) - g(r, z_2)$. Since $g(r,z)$ is continuous with respect to $r$, $\Delta g(r)$ is continuous in $\mathbb{R}$. Hence we have $\Delta g(r) > 0$ and $\Delta g(r_t) < 0$. By intermediate value theorem, there exists a $r'$ between $r$ and $r_t$ such that $\Delta g(r') = g(r', z_1) - g(r', z_2) = 0$. However, this contradicts Corollary \ref{geometric intuition}, which shows that any two solutions of problem \eqref{IVP} cannot intersect each other. Therefore, $g(r, \cdot)$ is non-decreasing.
        
        Next, we prove the existence of the limit. The monotonicity implies that, with any fixed $r \in \mathbb{R}$, the one-sided limits $\lim_{z \to a-} g(r, z)$ and $\lim_{z \to a+} g(r, z)$ exist for any $a \in \mathbb{R}$ \cite{rudin}. Then, $\lim_{z \to a} g(r, z)$ exists if and only if the two limits are equal.

        }

        We assume that $\lim_{z \to a-} g(r, z) \neq \lim_{z \to a+} g(r, z)$ for some $r$ and $a$. According to the strict monotonicity of $\psi(r, \cdot)$ shown in Lemma \ref{monotonicity}, there holds
        $$k_1=\psi(r, \lim_{z \to a-} g(r, z)) \neq \psi(r, \lim_{z \to a+} g(r, z)) = k_2,$$ for some scalars $k_1, k_2$.
        Since $y = g(r, z)$ is determined implicitly by $\psi(r, y) = \psi(r_t, z)$, we have $z = g(r_t, z)$ and thus the following inequality,
        $$k_1=\psi(r_t, \lim_{z \to a-} z) \neq \psi(r_t, \lim_{z \to a+} z)=k_2.$$
        Again, by Lemma \ref{monotonicity}, the preceding inequality implies that $\lim_{z \to a-} z \neq \lim_{z \to a+} z$, which is not true. Therefore, $\lim_{z \to a} g(r, z)$ exists for any $r, a \in \mathbb{R}$.
    \end{proof}

    \begin{lemma}\label{lemma-limit transitivity}
        For any $r, a \in \mathbb{R}$, the following equation holds,
        $$\lim_{z \to a} \psi(r,g(r,z)) = \psi(r,\lim_{z \to a} g(r,z)).$$
    \end{lemma}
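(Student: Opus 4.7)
The plan is to reduce the statement to the continuity of $\psi(r,\cdot)$ combined with the implicit characterization $\psi(r, g(r,z)) = \psi(r_t, z)$ of $g$ already established in Lemma \ref{IVP solution}.

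First, I would observe that Assumption 1 (continuity of $P$) together with the definition $F(r) = \int_{r}^{r_t} E/P(\gamma)\, d\gamma$ makes $F$ continuous on $\mathbb{R}$, so
\[
\psi(r, y) = F(r-c+cy) - F(r-cy) - 2\tau y
\]
is continuous in $y$ for every fixed $r \in \mathbb{R}$.

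Next, invoking the defining relation $\psi(r, g(r,z)) = \psi(r_t, z)$, the left-hand side of the claim rewrites as
\[
\lim_{z \to a} \psi(r, g(r,z)) \;=\; \lim_{z \to a} \psi(r_t, z) \;=\; \psi(r_t, a),
\]
where the second equality is the continuity of $\psi(r_t, \cdot)$ established in the first step.

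For the right-hand side, let $L := \lim_{z \to a} g(r,z)$, which exists by Lemma \ref{lemma-limit existence}. Combining Lemmas \ref{monotonicity} and \ref{ODE general solution property}, the map $\psi(r,\cdot)$ is a strictly monotone continuous bijection from $\mathbb{R}$ onto $\mathbb{R}$, hence its inverse $\psi(r,\cdot)^{-1}$ is continuous on $\mathbb{R}$. Writing $g(r,z) = \psi(r,\cdot)^{-1}\!\bigl(\psi(r_t,z)\bigr)$ and sending $z \to a$ then yields $L = \psi(r,\cdot)^{-1}\!\bigl(\psi(r_t,a)\bigr)$, which is a finite real number, and consequently $\psi(r, L) = \psi(r_t, a)$. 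Equating the two computed values produces the desired identity.

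The main (mild) obstacle is that Lemma \ref{lemma-limit existence} a priori only guarantees existence of the limit in the extended reals, since it leans on monotonicity; invoking continuity of $\psi(r,\cdot)$ at $L$ requires first ruling out $L = \pm \infty$. This is precisely handled by the continuity of $\psi(r,\cdot)^{-1}$ on all of $\mathbb{R}$, which forces $L$ to equal the finite preimage of $\psi(r_t, a)$. Once this technicality is cleared, the rest is just composition of limits.
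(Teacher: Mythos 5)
Your proof is correct, but it takes a different route from the paper's. The paper's own argument is a one\-liner: since $\lim_{z \to a} g(r,z)$ exists (Lemma \ref{lemma-limit existence}) and $\psi(r,\cdot)$ is continuous, the limit passes inside $\psi$ by the standard composition rule for limits --- nothing more is computed. You instead evaluate both sides explicitly through the representation $g(r,z) = \psi(r,\cdot)^{-1}\bigl(\psi(r_t,z)\bigr)$, using that a continuous strictly monotone bijection of $\mathbb{R}$ onto $\mathbb{R}$ has a continuous inverse, and show that each side equals $\psi(r_t,a)$. This is sound, and it actually proves more than the lemma asks for: the composition $\psi(r,\cdot)^{-1} \circ \psi(r_t,\cdot)$ is continuous in $z$, which is precisely the content of the subsequent Lemma \ref{continuity of g(r,z)}, so your argument would let the paper collapse Lemmas \ref{lemma-limit existence}--\ref{continuity of g(r,z)} into a single step. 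Two minor remarks: your worry that Lemma \ref{lemma-limit existence} only gives a limit in the extended reals is unnecessary --- a non\-decreasing real\-valued function on $\mathbb{R}$ has finite one\-sided limits at every finite point, since, e.g., $\lim_{z \to a^-} g(r,z) = \sup_{z<a} g(r,z) \le g(r,a) < \infty$ --- though your inverse\-function workaround also disposes of it; and once you have the continuous\-inverse representation, the appeal to Lemma \ref{lemma-limit existence} for the existence of $L$ becomes redundant, since the composition of continuous maps converges on its own.
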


    \begin{proof}
        {By continuity of function $\psi$ and the existence of $\lim_{z \to a} g(r,z)$, the statement can be established.}    
    \end{proof}

    \begin{lemma}\label{continuity of g(r,z)}
        For any $r \in \mathbb{R}$, function $y = g(r,z)$ is continuous with respect to $z$.
    \end{lemma}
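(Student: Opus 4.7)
The plan is to combine the implicit characterization $\psi(r, g(r,z)) = \psi(r_t, z)$ (from Lemma \ref{IVP solution}) with the limit-transitivity result of Lemma \ref{lemma-limit transitivity} and the strict monotonicity of $\psi(r,\cdot)$ from Lemma \ref{monotonicity}. Fix $r \in \mathbb{R}$ and $a \in \mathbb{R}$; I want to show $\lim_{z \to a} g(r,z) = g(r,a)$.

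First I would pass to the limit on both sides of the implicit equation $\psi(r, g(r,z)) = \psi(r_t, z)$ as $z \to a$. On the right-hand side, since $\psi(r_t, \cdot)$ is continuous in its second argument (as $F$ is continuous), we get $\lim_{z \to a} \psi(r_t, z) = \psi(r_t, a)$, and $\psi(r_t, a) = \psi(r, g(r,a))$ by the defining identity applied at $z = a$. On the left-hand side, Lemma \ref{lemma-limit existence} guarantees that $\lim_{z \to a} g(r,z)$ exists, so Lemma \ref{lemma-limit transitivity} lets me push the limit inside $\psi$ to obtain $\lim_{z \to a} \psi(r, g(r,z)) = \psi(r, \lim_{z \to a} g(r,z))$.

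Equating the two sides then yields
\begin{equation*}
\psi\!\left(r, \lim_{z \to a} g(r,z)\right) = \psi(r, g(r,a)).
\end{equation*}
By Lemma \ref{monotonicity}, $\psi(r, \cdot)$ is strictly decreasing and hence injective, so this equality forces $\lim_{z \to a} g(r,z) = g(r,a)$, which is exactly continuity of $g(r, \cdot)$ at $a$. Since $a$ was arbitrary, $g(r,\cdot)$ is continuous on $\mathbb{R}$.

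I do not foresee any real obstacle here, because all the heavy lifting has been done in the preceding lemmas: existence of the one-sided limit is Lemma \ref{lemma-limit existence}, swapping limit and $\psi$ is Lemma \ref{lemma-limit transitivity}, and injectivity of $\psi(r,\cdot)$ is Lemma \ref{monotonicity}. The only minor care point is to note explicitly that $\psi(r_t, \cdot)$ is continuous (a direct consequence of continuity of $F$), so that the right-hand side of the implicit equation behaves well under the limit; this is essentially immediate from the definition $\psi(r,y) = F(r-c+cy) - F(r-cy) - 2\tau y$.
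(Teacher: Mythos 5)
Your proposal is correct and follows essentially the same route as the paper's proof: both start from the implicit identity $\psi(r,g(r,z)) = \psi(r_t,z)$, use continuity of $F$ to pass to the limit on the right-hand side, invoke Lemma \ref{lemma-limit transitivity} (backed by the limit existence from Lemma \ref{lemma-limit existence}) to move the limit inside $\psi$ on the left, and conclude via the injectivity of $\psi(r,\cdot)$ from Lemma \ref{monotonicity}. The only difference is cosmetic (you write $z \to a$ where the paper perturbs $z$ by $\Delta \to 0$), so no further comment is needed.
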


    \begin{proof}
        For any initial value $z \in \mathbb{R}$, we have $\psi(r,g(r,z)) = \psi(r_t,z)$. Suppose we add a small perturbation $\Delta$ around $z$, then, the following equation holds for any $r \in \mathbb{R}$,
        \begin{align*}
            &\psi(r,g(r,z+\Delta)) = \psi(r_t,z+\Delta)\\
            &= F(r_t-c+cz+c\Delta) - F(r_t-cz-c\Delta) - 2\tau(z+\Delta).
        \end{align*}
        By the continuity of $F$, we have
        \begin{align*}
            &\lim_{\Delta \to 0} F(r_t-c+cz+c\Delta) = F(r_t-c+cz),\\
            &\lim_{\Delta \to 0} F(r_t-cz-c\Delta) = F(r_t-cz),
        \end{align*}
        which imply
        \begin{align*}
            \lim_{\Delta \to 0} \psi(r_t,z+\Delta) &= F(r_t-c+cz) - F(r_t-cz) - 2\tau z\\
            &= \psi(r_t,z).
        \end{align*}
        Therefore,  
        \begin{align*}
            \lim_{\Delta \to 0} \psi(r,g(r,z+\Delta)) = \psi(r_t,z) = \psi(r,g(r,z)).
        \end{align*}
        From Lemma \ref{lemma-limit transitivity}, the preceding equation can be rewritten as
        $$\psi(r,\lim_{\Delta \to 0} g(r,z+\Delta)) = \psi(r,g(r,z)).$$
        According to Lemma \ref{monotonicity}, $\psi(r,\cdot)$ is strictly decreasing in $\mathbb{R}$. It follows that
        \begin{align*}
            \lim_{\Delta \to 0} g(r,z+\Delta) = g(r,z),
        \end{align*}
        thus $g(r,z)$ is continuous in $z \in \mathbb{R}$.
    \end{proof}

    {To summarize we show that the solution of problem \eqref{IVP} depends continuously on the initial value $z$. \footnote{Note that though there is existing ODE theorem  \cite{arnold} implying a similar continuous dependence of an initial value problem, it requires additional smoothness conditions on $F$, and hence is not applicable here.}}

    \subsection{Exogenous Waiting Time Model}
     In this subsection, we present the following theorem for exogenous model. It shows the existence of extended indifference curve. Combine with Lemma \ref{IVP solution}, the question \ref{q1} and \ref{q2} are answered. Furthermore, the optimal strategies for all EV drivers in region $R$ are derived.

\begin{theorem}\label{exogenous optimal strategy}
        In the exogenous waiting time model, there exists an initial value $z$ satisfying
        \begin{align}\label{exogenous initial condition}
            \psi(r_t,z) =  w_A^x - w_B^x - \tau.
        \end{align}
        Denote the corresponding solution of problem \eqref{IVP} by $g(r,z)$. For any $r$ in the region $R$, the optimal strategy of EV associated with the pair $(r,y)$ is as follows:
        \begin{itemize}
            \item Choose Station A if $y < g(r,z)$;
            \item Choose Station B if $y > g(r,z)$;
            \item Be indifferent if $y = g(r,z)$.
        \end{itemize}
    \end{theorem}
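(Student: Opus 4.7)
The plan is to reduce the theorem to a direct application of the preliminary lemmas, using the key observation that $\Delta T^x(r,y)$ can be written as an affine shift of $\psi(r,y)$.

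First I would handle the existence claim. Setting $k := w_A^x - w_B^x - \tau$, Lemma \ref{ODE general solution property} applied at $r = r_t$ guarantees a (unique) $z \in \mathbb{R}$ with $\psi(r_t, z) = k$, which is exactly \eqref{exogenous initial condition}. By Lemma \ref{IVP solution}, this $z$ then yields a well-defined solution $g(\cdot, z)$ of \eqref{IVP} that satisfies the implicit identity
\begin{align*}
    \psi\bigl(r, g(r,z)\bigr) \;=\; \psi(r_t, z) \;=\; w_A^x - w_B^x - \tau \qquad \text{for all } r \in \mathbb{R}.
\end{align*}

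Next I would carry out an algebraic identification between $\Delta T^x$ and $\psi$. Expanding the definitions,
\begin{align*}
    \Delta T^x(r,y) \;=\; (w_B^x - w_A^x) + \tau(1-2y) + F\bigl(r-c+cy\bigr) - F(r-cy),
\end{align*}
which, recalling $\psi(r,y) = F(r-c+cy) - F(r-cy) - 2\tau y$, simplifies to
\begin{align*}
    \Delta T^x(r,y) \;=\; \psi(r,y) \;-\; \bigl(w_A^x - w_B^x - \tau\bigr).
\end{align*}
Combining this with the constancy of $\psi$ along $g(\cdot,z)$ immediately gives $\Delta T^x(r, g(r,z)) = 0$, so drivers at $(r, g(r,z))$ are indifferent.

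Finally I would read off the strict inequalities from monotonicity. By Lemma \ref{monotonicity}, $\psi(r, \cdot)$ is strictly decreasing, hence $y < g(r,z)$ implies $\psi(r,y) > \psi(r, g(r,z)) = w_A^x - w_B^x - \tau$, so $\Delta T^x(r,y) > 0$, i.e., $T_B^x > T_A^x$ and Station A is optimal; symmetrically, $y > g(r,z)$ yields $\Delta T^x(r,y) < 0$ and Station B is optimal. Restricting to $(r,y) \in R$ gives the stated characterization.

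I do not anticipate a real obstacle here: once the affine identity $\Delta T^x = \psi - k$ is observed, all three cases fall out of Lemmas \ref{monotonicity}--\ref{IVP solution} with no further work. The only point worth double-checking is that the initial value $z$ produced by Lemma \ref{ODE general solution property} need not lie in $[0,1]$; this is fine because the extended-domain setup is precisely what allows us to treat the case when no indifferent driver exists in $R$ (the whole curve $g(\cdot,z)$ may sit outside $R$, in which case the trichotomy still dictates a unique optimal station for every driver in $R$).
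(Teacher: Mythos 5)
Your proposal is correct and follows essentially the same route as the paper: establish existence and uniqueness of $z$ from the strict monotonicity and surjectivity of $\psi(r_t,\cdot)$, observe the identity $\Delta T^x(r,y) = \psi(r,y) + \tau + w_B^x - w_A^x$ so that $\Delta T^x(r,g(r,z))=0$, and read off the trichotomy from Lemma \ref{monotonicity}. Your closing remark about $z$ possibly lying outside $[0,1]$ matches the paper's own discussion of the extended indifference curve following the theorem.
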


    \begin{proof}
        First, as the right hand side of equation \eqref{exogenous initial condition} is a constant, such initial value $z$ satisfying equation \eqref{exogenous initial condition}  exists and is unique.
        
        Using the definition of total time cost in Section \ref{model}, we then obtain $\Delta T^x(r,y)  = \psi(r,y) + \tau + w_B^x - w_A^x$ for any $r$.  With $z$ satisfying equation \eqref{exogenous initial condition}, we have $\Delta T^x(r,g(r,z))= 0$. Lemma \ref{monotonicity} implies that $\psi(r, y)$ and  thus $\Delta T^x(r,y)$ are strictly decreasing in $y \in \mathbb{R}$. If $y < g(r,z)$, then $\Delta T^x(r,y)  > 0$, implying that the optimal strategy is to choose Station A.
          On the other hand, if $y > g(r,z)$, then $\Delta T^x(r,y)< 0$, so the optimal choice is to opt for Station B. The case of $y = g(r,z)$ implies that $\Delta T^x(r,y) = 0$, meaning the EV driver is indifferent between two charging stations.
    \end{proof}

    The function $g(r,z)$ stated in Theorem \ref{exogenous optimal strategy} is the extended indifference curve. If the intersection of the region $R$ and $g(r,z)$ is empty, all EV drivers in $R$ strictly prefer the same charging station. Otherwise, there exists at least one driver in $R$ who is indifferent between the two charging stations and people on either side of the indifference curve choose different charging stations. In Section \ref{numerical exogenous model} we illustrate some examples of indifference curves by numerical simulations.

    \subsection{Endogenous Waiting Time Model}

        In the aforementioned exogenous model, EV drivers make decisions without taking into account the strategies of other drivers. However, a more rational decision maker will also respond to the strategies of other drivers. Therefore, we further study the endogenous model, where EV drivers consider not only the exogenous waiting times, but also the congestion at two charging stations. We assume that the parameter pairs $(r, y)$ of EV drivers are uniformly distributed in the region $R$.

  Before delving into the main theorem for endogenous model, we need to establish another preliminary result related to the monotonicity and continuity of an integration. Specifically, we define 
    \begin{align*}
        A(z)  = \frac{1}{r_t-c} \int_{c}^{r_t} h(r,z) dr,
    \end{align*}
    where $h(r,z) = \min\{ 1, \max\{0, g(r,z)\} \}$.

  The integration $A(z)$ can be interpreted as the integral of $g(r, z)$ intersecting region $R$.  In the following lemma, we show the properties of this integration. 

    \begin{lemma}\label{Area}
        $A(z)$ is a bounded, non-decreasing and continuous function.
    \end{lemma}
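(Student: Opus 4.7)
The plan is to verify each of the three claims in turn, exploiting the clipping structure of $h(r,z)=\min\{1,\max\{0,g(r,z)\}\}$ and the properties of $g(r,\cdot)$ already established in Lemmas \ref{lemma-limit existence} and \ref{continuity of g(r,z)}.

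First, for boundedness, I would simply observe that by construction $h(r,z)\in[0,1]$ for every $r,z$, so integrating over $[c,r_t]$ and dividing by $r_t-c$ yields $A(z)\in[0,1]$.

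Second, for monotonicity, I would use Lemma \ref{lemma-limit existence} which gives that $g(r,\cdot)$ is non-decreasing in $z$ for every fixed $r$. The scalar map $x\mapsto\min\{1,\max\{0,x\}\}$ is non-decreasing, so the composition $h(r,\cdot)$ is non-decreasing in $z$ as well. Non-decreasingness is preserved under integration in $r$: if $z_1\le z_2$, then the integrand inequality $h(r,z_1)\le h(r,z_2)$ holds pointwise, so $A(z_1)\le A(z_2)$.

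Third, for continuity, I would fix $z\in\mathbb{R}$ and a sequence $z_n\to z$. By Lemma \ref{continuity of g(r,z)}, $g(r,z_n)\to g(r,z)$ for each $r$, and since the clipping map is continuous, $h(r,z_n)\to h(r,z)$ pointwise in $r$. Moreover $|h(r,z_n)|\le 1$ for all $n,r$, which is integrable on the bounded interval $[c,r_t]$. Applying the dominated convergence theorem gives
\begin{align*}
\lim_{n\to\infty} A(z_n) = \frac{1}{r_t-c}\int_c^{r_t}\lim_{n\to\infty} h(r,z_n)\,dr = A(z),
\end{align*}
which yields continuity of $A$.

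The main obstacle, if any, is the continuity step: one must justify the interchange of limit and integral. Since the prior lemmas only give pointwise (not uniform) continuity of $g(r,\cdot)$ in $z$, a uniform-convergence argument would be delicate, but the uniform boundedness $|h|\le 1$ on a bounded interval makes dominated convergence immediately applicable, so the difficulty dissolves cleanly.
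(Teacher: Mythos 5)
Your proof is correct. The boundedness and monotonicity steps are identical to the paper's: $h(r,z)\in[0,1]$ by construction, and monotonicity of $A$ follows from the monotonicity of $g(r,\cdot)$ in Lemma \ref{lemma-limit existence} composed with the non-decreasing clipping map. Where you genuinely diverge is the continuity step. The paper argues directly with an $\varepsilon$--$\delta$ estimate: it invokes Lemma \ref{continuity of g(r,z)} to get $|h(r,z)-h(r,z')|<\varepsilon$ whenever $|z-z'|<\delta$, and then pushes this bound through the integral to conclude $|A(z)-A(z')|<\varepsilon$. That argument implicitly requires a single $\delta$ that works for all $r\in[c,r_t]$ simultaneously, i.e., equicontinuity in $z$ uniformly over $r$, which is not what the pointwise continuity lemma literally provides. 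Your route via sequences, pointwise convergence of $h(r,z_n)\to h(r,z)$, and the dominated convergence theorem with the dominating constant $1$ on the bounded interval $[c,r_t]$ sidesteps this issue entirely: only pointwise convergence in $r$ is needed, and the uniform bound does the rest. So your argument is, if anything, the more airtight of the two; the paper's is more elementary but leans on an unstated uniformity. Both reach the same conclusion with the same supporting lemmas.
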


    \begin{proof}
        According to the definition of $h(r, z)$, $A(z)$ is naturally lower bounded by 0 and upper bounded by 1.
    
        The monotonicity follows from Lemma \ref{lemma-limit existence}.

        To prove the continuity, we use the property that $g(r, z)$ is continuous in $z$ by Lemma \ref{continuity of g(r,z)}. That says, for any  $r \in \mathbb{R}$ and  $z \in \mathbb{R}$ and every $\varepsilon > 0$, there exists a $\delta > 0$ such that
        \begin{align*}
            |h(r,z) - h(r,z')| < \varepsilon,
        \end{align*}
        for all points $z' \in \mathbb{R}$ satisfying $|z - z'| < \delta$.

        Then we have
        \begin{align*}
            |A(z) - A(z')| &= \frac{1}{r_t-c} \Big| \int_{c}^{r_t} h(r,z) dr - \int_{c}^{r_t} h(r,z') dr \Big|\\
            &\le \frac{1}{r_t-c} \int_{c}^{r_t} \Big| h(r,z) - h(r,z') \Big| dr\\
            &< \frac{1}{r_t-c}\varepsilon (r_t-c)\\
            &= \varepsilon.
        \end{align*}
        Hence, $A(z)$ is continuous in $z$.
    \end{proof}

With the proper initial value $z$,  $g(r, z)$ is the extended indifference curve, then the integration $A(z)$ represents the congestion $\alpha$. Formally, we first give the optimal strategies of EV drivers in the next theorem, assuming the extended indifference curve exists. Then, we find the appropriate initial value $z$ such that this existence is guaranteed. In this way, the two questions \ref{q1} and \ref{q2} are naturally answered.

    \begin{theorem}\label{thm: endogenous strategy}
        If $g(r, z)$ is the extended indifference curve, the optimal strategy of EV associated with the pair $(r,y)$ is as follows:
        \begin{itemize}
            \item Choose Station A if $y < g(r,z)$;
            \item Choose Station B if $y > g(r,z)$;
            \item Be indifferent if $y = g(r,z)$.
        \end{itemize}
    \end{theorem}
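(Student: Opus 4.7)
The plan is to mimic the structure of the proof of Theorem \ref{exogenous optimal strategy}, with the key observation that once we \emph{assume} $g(r,z)$ is the extended indifference curve, the congestion level $\alpha$ (and thus the endogenous waiting times $w_A^n, w_B^n$) are fixed numbers, so the difficulty that distinguishes the endogenous model from the exogenous one is already bypassed by the hypothesis.

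First I would unpack the total time cost difference. Using the definitions in Section \ref{model}, I expand
\[
\Delta T^n(r,y) = w_B^n - w_A^n + \tau(1-2y) + F(r-c+cy) - F(r-cy),
\]
and rearrange to obtain the clean form
\[
\Delta T^n(r,y) = \psi(r,y) + \tau + w_B^n - w_A^n,
\]
which mirrors the corresponding identity in the exogenous case but now with $w_A^n, w_B^n$ in place of $w_A^x, w_B^x$. Although $w_A^n$ and $w_B^n$ depend on $\alpha$, they are constants once the equilibrium configuration (i.e., the indifference curve) is fixed, so we may treat them as such here.

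Next, since $g(r,z)$ is assumed to be the extended indifference curve, by definition $\Delta T^n(r,g(r,z))=0$ for every $r$, which gives the relation $\psi(r,g(r,z)) = w_A^n - w_B^n - \tau$ for all $r$. By Lemma \ref{monotonicity}, $\psi(r,\cdot)$ is strictly decreasing on $\mathbb{R}$; consequently, for any fixed $r$, $\Delta T^n(r,\cdot)$ is strictly decreasing in $y$ as well. From this monotonicity, the three cases follow immediately: if $y < g(r,z)$, then $\psi(r,y) > \psi(r,g(r,z))$, so $\Delta T^n(r,y) > 0$ and Station A is strictly preferred; symmetrically $y > g(r,z)$ forces $\Delta T^n(r,y) < 0$, so Station B is chosen; and $y = g(r,z)$ yields $\Delta T^n(r,y) = 0$, i.e., indifference.

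I do not anticipate a serious obstacle, because the theorem is deliberately phrased \emph{conditionally}: the existence of the initial value $z$ for which $g(r,z)$ actually realizes the endogenous indifference curve (the truly hard coupling between $\alpha$ and the curve) is not claimed here and would be handled separately, presumably via a fixed-point argument using the boundedness, monotonicity, and continuity of $A(z)$ from Lemma \ref{Area}. So for this theorem the only care needed is simply algebraic bookkeeping in the expansion of $\Delta T^n$ and correct invocation of the strict monotonicity of $\psi(r,\cdot)$.
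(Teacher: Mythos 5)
Your proposal is correct and follows essentially the same route as the paper's proof: expand $\Delta T^n(r,y) = \psi(r,y) + \tau + w_B^x - w_A^x + \epsilon - 2\epsilon\alpha$ (your $w_B^n - w_A^n$ form is the same identity), use the hypothesis to get $\Delta T^n(r,g(r,z))=0$, and invoke the strict monotonicity of $\psi(r,\cdot)$ from Lemma~\ref{monotonicity} with $\alpha$ treated as a fixed constant. No gaps.
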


    \begin{proof}
        We first write the total time cost defined in Section \ref{model},
        \begin{align*}
            \Delta T^n(r,y) = \psi(r,y) + \tau + w_B^x - w_A^x + \epsilon - 2\epsilon\alpha,
        \end{align*}
        where $\alpha \in [0, 1]$ represents the proportion of EVs that choose Station A as their optimal strategy. If $g(r,z)$ is indeed the extended indifference curve, it satisfies $\Delta T^n(r,g(r,z)) = 0$. Since we assume all EV drivers see the same congestion when making decisions, Lemma \ref{monotonicity} implies that $\Delta T^n(r,y)$ is strictly decreasing in $y$. Therefore, we have the following observations. If $y < g(r,z)$, then $\Delta T^n(r,y)  > 0$, implying that the optimal strategy is to choose Station A. On the other hand, if $y > g(r,z)$, then $\Delta T^n(r,y)< 0$, so the optimal strategy is to choose Station B. The case of $y = g(r,z)$ implies that $\Delta T^n(r,y) = 0$, meaning the EV driver is indifferent between two charging stations.
    \end{proof}

    \begin{theorem}\label{endogenous optimal strategy}
        In the endogenous waiting time model, there exists an initial value $z$ satisfying
        \begin{align}\label{endogenous initial condition}
            \psi(r_t,z) = w_A^x - w_B^x - \tau - \epsilon + {2\epsilon}A(z),
        \end{align}
        such that the corresponding solution of problem \eqref{IVP}, denoted by $g(r,z)$, is the extended indifference curve.
    \end{theorem}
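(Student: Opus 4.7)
The plan is to establish existence via a one-dimensional fixed-point argument (intermediate value theorem applied to a suitably defined map in $z$), thereby dissolving the apparent circularity between the congestion $\alpha$ and the indifference curve $g(r,z)$. Parameterizing the problem by the initial value $z$ is the key idea: once $z$ is chosen, $g(r,z)$ is uniquely determined by Lemma \ref{IVP solution}, and by Theorem \ref{thm: endogenous strategy} (combined with the uniform-distribution assumption on $R$), the implied congestion at Station A is exactly $A(z)$. Hence the equilibrium condition reduces to a scalar equation in $z$, namely equation \eqref{endogenous initial condition}.

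Concretely, I would define
\begin{align*}
\phi(z) \;=\; \psi(r_t,z) \;-\; \bigl(w_A^x - w_B^x - \tau - \epsilon + 2\epsilon A(z)\bigr),
\end{align*}
and look for a zero of $\phi$. First I would establish continuity of $\phi$: the term $\psi(r_t,z) = F(r_t - c + cz) - F(r_t - cz) - 2\tau z$ is continuous in $z$ by continuity of $F$, and $A(z)$ is continuous by Lemma \ref{Area}. Next I would establish the limiting behavior: since $F$ is continuous and strictly decreasing on $\mathbb{R}$, we have $\lim_{z\to\infty}\psi(r_t,z)=-\infty$ and $\lim_{z\to-\infty}\psi(r_t,z)=\infty$ (as used already in Lemma \ref{ODE general solution property}), while $A(z)\in[0,1]$ is bounded. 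Therefore $\phi(z)\to-\infty$ as $z\to\infty$ and $\phi(z)\to\infty$ as $z\to-\infty$. The intermediate value theorem then yields some $z^*\in\mathbb{R}$ with $\phi(z^*)=0$, which is exactly equation \eqref{endogenous initial condition}.

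It then remains to argue that the corresponding solution $g(r,z^*)$ really is the extended indifference curve. Using the total-time-cost expression
\begin{align*}
\Delta T^{n}(r,y) \;=\; \psi(r,y) + \tau + w_B^x - w_A^x + \epsilon - 2\epsilon\alpha,
\end{align*}
I would substitute $y=g(r,z^*)$. By construction (and Lemma \ref{IVP solution}) $\psi(r,g(r,z^*))=\psi(r_t,z^*)$ for every $r$. Since the EV drivers are uniformly distributed on $R$ and, by Theorem \ref{thm: endogenous strategy}, any EV strictly below $g(r,z^*)$ picks Station A while any EV strictly above picks Station B, the proportion opting for Station A is precisely $\alpha=A(z^*)$. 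Plugging $\psi(r_t,z^*)=w_A^x-w_B^x-\tau-\epsilon+2\epsilon A(z^*)$ into the expression for $\Delta T^{n}$ gives $\Delta T^{n}(r,g(r,z^*))=0$ for all $r$, confirming that $g(r,z^*)$ is indeed an extended indifference curve and completing the proof.

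The main obstacle I anticipate is the self-referential nature of the endogenous equilibrium: the congestion $\alpha$ depends on the indifference curve, which in turn depends on $\alpha$. Re-expressing this coupling through the scalar quantity $z$ (via the IVP formulation) and proving that the induced map $z\mapsto A(z)$ is continuous and bounded (Lemma \ref{Area}) is what makes the IVT argument applicable; without those properties of $A$, existence would not follow so cleanly. A secondary, purely technical subtlety is handling the extreme case in which the range of $g(r,\cdot)\cap [0,1]$ may be empty on part of the $r$-interval, but this is already absorbed by the truncation $h(r,z)=\min\{1,\max\{0,g(r,z)\}\}$ used in the definition of $A(z)$.
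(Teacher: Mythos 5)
Your proposal is correct and follows essentially the same route as the paper: both define the same scalar function of $z$ (your $\phi$ is the paper's $\rho$), use continuity of $\psi(r_t,\cdot)$ together with Lemma \ref{Area} and the boundedness of $A$ to get the limits $\pm\infty$, and apply the intermediate value theorem before verifying that $\Delta T^{n}(r,g(r,z^*))=0$ with $\alpha=A(z^*)$. The only cosmetic difference is that the paper additionally notes strict monotonicity of $\rho$ to obtain uniqueness of $z$, which your argument omits but which is not required by the statement.
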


    \begin{proof}

        First, we show that there exists one and only one value of $z \in \mathbb{R}$ such that equation \eqref{endogenous initial condition} is satisfied. Denote
        \begin{align*}
            \rho(z) = \psi(r_t,z) + \tau + w_B^x - w_A^x + \epsilon - 2\epsilon A(z).
        \end{align*}
        By Lemma \ref{monotonicity} and Lemma \ref{Area}, we observe that $\rho(z)$ is continuous and strictly decreasing. As $A(z)$ is bounded, we have $\lim_{z \to -\infty} \rho(z) = \infty$ and $\lim_{z \to \infty} \rho(z) = -\infty$. Therefore, by the intermediate value theorem, there exists an initial value $z \in \mathbb{R}$ such that $\rho(z) = 0$ and so on equation \eqref{endogenous initial condition} is satisfied. The uniqueness follows from the strict monotonicity of $\rho(z)$.

        By letting $\rho(z) = 0$, we observe that $g(r,z)$ is the extended indifference curve satisfying $\Delta T^n(r,g(r,z)) = 0$, while the integral $A(z)$ measures the proportion of drivers who choose Station A, i.e., the congestion $\alpha$.
    \end{proof}

    {Finally, we conclude this section by comparing with our preliminary work \cite{allerton}. In this paper, we first extend region $R$ and introduce the concept of extended indifferent driver. Then we propose a unified ODE-based formulation for both exogenous and endogenous models, prove the existence of extended indifference curve and give the appropriate initial conditions to the ODE. In comparison, our preliminary work focuses on the exogenous model. It first derives the existence conditions of indifferent drivers in region $R$, and then analyzes the properties of indifference curve  under specific assumptions. However, that method cannot be directly extended to endogenous model due to the dependency of congestion and strategies. The advantage of ODE approach is that it provides an efficient way to solve the endogenous model and gives a unified framework for both waiting time models. In fact, the exogenous model can be regarded as a special case of endogenous model by letting $\epsilon=0$.}

    {Since we extend the region for analysis convenience, the resulting extended indifference curve may lie outside the physically feasible region $R$. In the next section, we  characterize the conditions for which it is in $R$.}



\section{Characteristics of the Extended Indifference Curve}\label{sec: decreasing charging rates}

    {Recall that we assume the charging rate function $P$ to be continuous and the charging time function $F$ is defined as $F(r) = \int_{r}^{r_t} E/P(\gamma) d\gamma$.} In Section \ref{Optimal strategies}, we consider the charging rate function $P$ in its general form, with the only assumption being its continuity. In practice, as shown in Fig. \ref{field data charging rates}, EVs' charging rate is typically non-increasing as the state-of-charge increases. Therefore, we formally make the following assumption in this section.

    \begin{assumption}\label{assumption: P non-increasing}
        We consider charging rate function $P$ to be non-increasing in $[0, 1]$.
    \end{assumption}

    \begin{lemma}
        The charging time function $F$ is concave.
    \end{lemma}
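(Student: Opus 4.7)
The plan is to exploit the fundamental theorem of calculus to extract $F'$ explicitly, then translate the monotonicity assumption on $P$ into monotonicity of $F'$, and finally invoke a standard characterization of concavity via the first derivative.

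First, I would apply the fundamental theorem of calculus: since $P$ is continuous and strictly positive, the integrand $E/P(\gamma)$ is continuous, so $F$ is continuously differentiable with
\[
F'(r) \;=\; -\frac{E}{P(r)}.
\]
This is the same identity already used implicitly in the preceding sections when writing $F'(r-cy)$, etc., so no additional regularity is required.

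Next, I would translate Assumption~\ref{assumption: P non-increasing} into a statement about $F'$. Since $P$ is non-increasing and strictly positive on $[0,1]$, the map $r \mapsto 1/P(r)$ is non-decreasing, and hence $r \mapsto -E/P(r)$ is non-increasing. Therefore $F'$ is non-increasing on $[0,1]$ (and in fact on the extended domain, once $P$ has been continuously extended with the monotonicity preserved, as in Section~\ref{Optimal strategies}).

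Finally, I would appeal to the standard result that a differentiable function whose derivative is non-increasing on an interval is concave on that interval. Concretely, for any $r_1 < r_2$ in the domain and any $\lambda \in [0,1]$, letting $r_\lambda = \lambda r_1 + (1-\lambda) r_2$, one writes
\[
F(r_1) - F(r_\lambda) \;=\; \int_{r_\lambda}^{r_1} F'(s)\,ds, \qquad F(r_2) - F(r_\lambda) \;=\; \int_{r_\lambda}^{r_2} F'(s)\,ds,
\]
and uses the monotonicity of $F'$ to bound each integrand by $F'(r_\lambda)$ in the appropriate direction, yielding
\[
\lambda F(r_1) + (1-\lambda)F(r_2) \;\le\; F(r_\lambda).
\]
The only mild subtlety (and the only place I would be careful) is that $F''$ need not exist everywhere, since $P$ is only assumed continuous and non-increasing, not differentiable; hence the argument should not go through $F'' \le 0$ but rather directly through the monotonicity of $F'$ as above. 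This is the main (and essentially only) obstacle, and it is resolved by using the first-derivative characterization of concavity rather than the second.
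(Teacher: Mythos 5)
Your proof is correct and follows essentially the same route as the paper's: compute $F'(r) = -E/P(r)$, use that $P$ is positive and non-increasing to conclude $F'$ is non-increasing, and deduce concavity. The paper simply asserts the last implication, whereas you spell out the first-derivative characterization of concavity (correctly noting that one should not route through $F''$, which may not exist); this is just a more detailed version of the same argument.
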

    \begin{proof}
        Since $F$ is differentiable, we take the first order derivative as $F'(r) = -\frac{E}{P(r)}$. By definition and Assumption \ref{assumption: P non-increasing}, $P(r)$ is strictly positive and non-increasing. Hence $F'(r)$ is (weakly) decreasing and so on $F$ is concave.
    \end{proof}

    We then identify the weak monotonicity of the extended indifference curve under Assumption \ref{assumption: P non-increasing}.

    \begin{theorem}\label{concave property}
        For both exogenous and endogenous models, the solution $y = g(r,z)$ of problem \eqref{IVP} has the following properties for all $r \in \mathbb{R}$:
        \begin{itemize}
            \item If $z = 1/2$, $g(r,z) = 1/2$;
            \item If $z < 1/2$, $g(r,z) < 1/2$ and is non-decreasing in $r$;
            \item If $z > 1/2$, $g(r,z) > 1/2$ and is non-increasing in $r$.
        \end{itemize}
    \end{theorem}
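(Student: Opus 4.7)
The plan is to handle the three cases separately, dispatching $z = 1/2$ by direct substitution and then using the non-crossing property from Corollary~\ref{geometric intuition} to locate the curve $g(\cdot, z)$ relative to the horizontal line $y = 1/2$ when $z \ne 1/2$. A direct computation gives $\psi(r, 1/2) = F(r - c/2) - F(r - c/2) - \tau = -\tau$ for every $r$, so $y = 1/2$ satisfies $\psi(r, y) = \psi(r_t, 1/2) = -\tau$; by the uniqueness in Lemma~\ref{ODE general solution property}, $g(r, 1/2) \equiv 1/2$. For $z \ne 1/2$, the initial values $g(r_t, z) = z$ and $g(r_t, 1/2) = 1/2$ differ, so by Corollary~\ref{geometric intuition} the curves $g(\cdot, z)$ and $g(\cdot, 1/2) \equiv 1/2$ cannot cross. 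Hence $g(r, z)$ stays strictly below (resp.\ above) $1/2$ for all $r$ when $z < 1/2$ (resp.\ $z > 1/2$), which establishes the location claims.

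To obtain monotonicity in $r$, I would first invoke the implicit function theorem to upgrade $g(\cdot, z)$ from merely continuous to continuously differentiable: $\psi$ is $C^1$ since $F$ is continuously differentiable on the extended real line by construction, and $\partial \psi / \partial y < 0$ by the proof of Lemma~\ref{monotonicity}; thus the curve defined implicitly by $\psi(r, y) = \psi(r_t, z)$ is $C^1$ in $r$ and satisfies the ODE in \eqref{IVP} pointwise. Next, the coefficient multiplying $dy/dr$, namely $cF'(r - c + cy) + cF'(r - cy) - 2\tau$, is strictly negative everywhere, because $F'$ is strictly negative (from strict monotonicity of $F$) and $\tau > 0$. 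Consequently $\mathrm{sign}(dy/dr)$ equals the opposite sign of $F'(r - c + cy) - F'(r - cy)$.

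The final ingredient is the concavity of $F$, which was established from Assumption~\ref{assumption: P non-increasing} in the lemma immediately preceding the theorem, and which tells us that $F'$ is non-increasing. When $z < 1/2$, the first paragraph forces $y = g(r, z) < 1/2$, so $r - cy > r - c + cy$, and the non-increasing property of $F'$ yields $F'(r - cy) \le F'(r - c + cy)$; the RHS of the ODE is therefore non-positive, and dividing by the strictly negative coefficient gives $dy/dr \ge 0$, proving that $g(\cdot, z)$ is non-decreasing. The case $z > 1/2$ is perfectly symmetric and yields $dy/dr \le 0$. I expect the main obstacle to be a bookkeeping one rather than a conceptual one: the earlier development only secured continuity of $g$ in $r$ through an implicit equation, so before applying the ODE pointwise I have to justify differentiability via the implicit function theorem; once that is in hand, the remainder is a short sign-tracking argument driven entirely by the concavity of $F$.
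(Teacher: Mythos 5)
Your proposal is correct and follows essentially the same route as the paper: uniqueness of the constant solution for $z=1/2$, the non-crossing property of Corollary~\ref{geometric intuition} to pin the curve on one side of $y=1/2$, and the concavity of $F$ combined with the strictly negative coefficient of $dy/dr$ to fix the sign of the derivative (the paper phrases this last step as a contradiction, you argue it directly, but it is the same sign computation). Your explicit appeal to the implicit function theorem to justify that $g(\cdot,z)$ is differentiable before using the ODE pointwise is a small technical refinement the paper leaves implicit, not a different approach.
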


    \begin{proof}
        When $z = 1/2$, we verify that $y = g(r,z) = 1/2$ is a solution of problem \eqref{IVP} by observing that $r-c+cy = r-cy=r-\frac{c}{2}$ and the ODE reduces to $\frac{dy}{dr}=0$. According to Lemma \ref{ODE general solution property} and \ref{IVP solution}, the solution is unique. Thus, we have $g(r,z) = 1/2$ for all $r \in \mathbb{R}$.

        If $z < 1/2$, by Corollary \ref{geometric intuition}, $g(r,z) < 1/2$ for all $r \in \mathbb{R}$. Otherwise, it will intersect the solution with the initial value being $z = 1/2$. To demonstrate the monotonicity of $y = g(r,z)$ with respect to $r$, it is equivalent to showing $dy/dr \ge 0$ for any $r \in \mathbb{R}$. We assume the contrary that there exists one $r$ such that $dy/dr < 0$. {As $y = g(r, z) < 1/2$, we have $r-cy > r-c+cy$}. Then the LHS of the ODE in \eqref{ODE} is positive since $F$ is strictly decreasing, which contradicts the RHS being non-positive due to the concavity of $F$. Therefore, $g(r,z)$ is non-decreasing in $r$ when $z < 1/2$.

        The case of $z > 1/2$ can be established similarly.
    \end{proof}

    {Intuitively, as Station A is at $y=0$ and Station B is at $y=1$, the two charging stations are symmetric around $y=1/2$. If $z=1/2$, there is an indifferent driver at $(r_t, 1/2)$. In this case, even if we exchange the positions of Station A and B, the extended indifference curve should remain the same. Thus, we shall have $g(r,z) = 1/2$ for all $r \in \mathbb{R}$ as stated in Theorem \ref{concave property}. On the other hand, if $z < 1/2$, $g(r,z)$ should also be symmetric with respect to the line $y=1/2$, compared to the case when $z > 1/2$. Furthermore, we see that the extended indifference curve is weakly monotonic by assuming the monotonicity of charging rates. Then, we show the necessary and sufficient condition under which an indifferent driver exists in region $R$.}

    \begin{corollary}
        In the exogenous model, an indifferent driver exists in region $R$ if and only if the following condition holds
        \begin{align}
            F(r_{t}-c) - F(r_{t}) \ge - \tau + |w_{A}^x - w_{B}^x|. \label{exogenous - indifferent driver existence condition}
        \end{align}
    \end{corollary}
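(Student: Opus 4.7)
The plan is to reduce the existence of an indifferent driver in $R = [c, r_t] \times [0, 1]$ to whether the initial value $z$ (guaranteed by Theorem \ref{exogenous optimal strategy}) lies in $[0,1]$, and then translate the condition $z \in [0,1]$ into the stated inequality via the monotonicity of $\psi(r_t, \cdot)$.

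First I would show the equivalence: an indifferent driver exists in $R$ if and only if $z \in [0,1]$. For the ``if'' direction, if $z \in [0,1]$, then since $g(r_t, z) = z$ by construction of problem \eqref{IVP}, the point $(r_t, z)$ lies in $R$ and is an indifferent driver. For the ``only if'' direction, suppose $z \notin [0, 1]$ and assume without loss of generality $z < 0$ (the case $z > 1$ is symmetric). Then $z < 1/2$, so by Theorem \ref{concave property}, $g(r, z)$ is non-decreasing in $r$, which gives $g(r, z) \le g(r_t, z) = z < 0$ for every $r \le r_t$. Thus the extended indifference curve does not intersect $R$, so no indifferent driver exists in $R$.

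Next I would characterize when $z \in [0,1]$. By Lemma \ref{monotonicity}, $\psi(r_t, \cdot)$ is strictly decreasing, and by equation \eqref{exogenous initial condition}, $z$ is the unique value with $\psi(r_t, z) = w_A^x - w_B^x - \tau$. Hence $z \in [0, 1]$ iff
\begin{align*}
\psi(r_t, 1) \le w_A^x - w_B^x - \tau \le \psi(r_t, 0).
\end{align*}
A direct computation using $\psi(r, y) = F(r - c + cy) - F(r - cy) - 2\tau y$ yields
\begin{align*}
\psi(r_t, 0) = F(r_t - c) - F(r_t), \qquad \psi(r_t, 1) = F(r_t) - F(r_t - c) - 2\tau.
\end{align*}

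Finally I would substitute these into the two-sided inequality and simplify. The right inequality becomes $w_A^x - w_B^x \le F(r_t - c) - F(r_t) + \tau$, while the left becomes $w_B^x - w_A^x \le F(r_t - c) - F(r_t) + \tau$. Combining them gives $|w_A^x - w_B^x| \le F(r_t - c) - F(r_t) + \tau$, which is exactly condition \eqref{exogenous - indifferent driver existence condition}. The main obstacle is the ``only if'' direction of the first step, where one must invoke Theorem \ref{concave property} to guarantee that once $z$ leaves $[0,1]$ the monotonicity of $g(r, z)$ keeps the entire curve on $r \in [c, r_t]$ outside the feasible strip; the remaining algebra is routine.
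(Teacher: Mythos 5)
Your proposal is correct and follows essentially the same route as the paper: reduce existence in $R$ to the condition $z \in [0,1]$ (justified via Theorem \ref{concave property}, exactly as the paper's Remark \ref{remark: selection of initial condition} notes that $0 \le y(r_t) \le 1$ is the only tight constraint), then convert that to the two-sided inequality $\psi(r_t,1) \le w_A^x - w_B^x - \tau \le \psi(r_t,0)$ via Lemma \ref{monotonicity} and simplify. You merely spell out the monotonicity argument and the evaluation of $\psi(r_t,0)$ and $\psi(r_t,1)$ that the paper leaves implicit.
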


    \begin{proof}
        Based on Theorem \ref{exogenous optimal strategy} and \ref{concave property}, we observe that an indifferent driver exists in $R$ if and only if the initial value $z$ satisfies both equation \eqref{exogenous initial condition} and $0 \le z \le 1$. Using Lemma \ref{monotonicity}, we find that $\psi(r_t, z)$ is strictly decreasing in $z$. Therefore, an equivalent characterization is       \begin{align*}
            \psi(r_t, 1) \le - \tau - w_B^x + w_A^x \le \psi(r_t, 0),
        \end{align*}
        which yields inequality \eqref{exogenous - indifferent driver existence condition}.
    \end{proof}

    With charging rate non-increasing as the state-of-charge increases, we can infer that when inequality \eqref{exogenous - indifferent driver existence condition} is violated for exogenous model, all EVs will choose the same charging station as their optimal strategies. This scenario is more likely to occur when one of the following conditions is met:
    \begin{enumerate}
        \item The waiting time difference, $|w_{A}^x - w_{B}^x|$, is large;
        \item The unit distance traveling time, $\tau$, is small;
        \item The power consumption per unit distance, $c$, is small;
        \item The target battery capacity level, $r_t$, is small.
    \end{enumerate}
    
    From an EV's perspective, two charging stations appear identical except for the waiting time difference. Hence, all EV drivers are inclined to select the station with the lower waiting time, especially when the difference is large. Furthermore, if the unit distance traveling time and the power consumption over unit distance approach zero, that is, $\tau, c \to 0$, i.e., it is cheap to travel,  then even a small waiting time difference may cause all EV drivers to choose the same charging station and leave the other station completely idle. The last condition is due to the concavity of $F$ as verified by real-world data that charging time from 20\% to 40\%  is much shorter than 80\% to 100\% (full battery). In this case, the overall cost of charging is dominated by the waiting/travel time if only charging to a smaller $r_t$.   Consequently, in the exogenous model, it is likely that all drivers head to one charging station with a shorter waiting time, whereas the rational  EV drivers should anticipate that the once-idle charging station would soon become congested. This phenomenon is alleviated in the endogenous model.

        \begin{corollary}\label{endogenous indifferent driver}
        In the endogenous model, an indifferent driver exists in region $R$ if and only if the following condition holds
        \begin{align}
            F(r_{t}-c) - F(r_{t}) \ge - \tau - \epsilon + |w_{A}^x - w_{B}^x|. \label{endogenous - indifferent driver existence condition}
        \end{align}
    \end{corollary}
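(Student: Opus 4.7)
The plan is to mimic the exogenous proof, but with one extra ingredient: pinning down the endpoint values of the congestion integral $A(z)$. By Theorems \ref{exogenous optimal strategy} and \ref{thm: endogenous strategy} together with Theorem \ref{concave property}, the extended indifference curve $g(\cdot,z)$ intersects $R=[c,r_t]\times[0,1]$ if and only if its initial value $z=g(r_t,z)$ lies in $[0,1]$. Indeed, if $z>1$ (resp.\ $z<0$), Theorem \ref{concave property} forces $g(r,z)\geq z>1$ (resp.\ $g(r,z)\leq z<0$) for all $r\leq r_t$, so the curve misses $R$ entirely; and if $z\in[0,1]$ then $(r_t,z)$ itself is an indifferent driver in $R$. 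So the goal reduces to showing that the unique root $z^*$ of the endogenous equation \eqref{endogenous initial condition} satisfies $0\leq z^*\leq 1$ precisely when inequality \eqref{endogenous - indifferent driver existence condition} holds.

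Next I would reuse the function
\[
\rho(z)=\psi(r_t,z)+\tau+w_B^x-w_A^x+\epsilon-2\epsilon A(z),
\]
already proved in Theorem \ref{endogenous optimal strategy} to be continuous, strictly decreasing, and with $\rho(-\infty)=+\infty$, $\rho(+\infty)=-\infty$. Because $\rho$ is strictly decreasing, its unique root lies in $[0,1]$ if and only if $\rho(1)\leq 0\leq \rho(0)$.

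To evaluate the endpoints I need $A(0)$ and $A(1)$. This is the one nontrivial step: invoking Theorem \ref{concave property}, when $z=0$ the solution satisfies $g(r,0)\leq g(r_t,0)=0$ for every $r\in[c,r_t]$ (since $g(\cdot,0)$ is non-decreasing and lies below $1/2$), so the truncation $h(r,0)=\min\{1,\max\{0,g(r,0)\}\}$ vanishes identically and $A(0)=0$; symmetrically, when $z=1$ we have $g(r,1)\geq g(r_t,1)=1$ on $[c,r_t]$, giving $h(r,1)\equiv 1$ and $A(1)=1$. Substituting and using $\psi(r_t,0)=F(r_t-c)-F(r_t)$ and $\psi(r_t,1)=F(r_t)-F(r_t-c)-2\tau$, the two endpoint conditions become
\[
F(r_t-c)-F(r_t)\;\geq\;-\tau-\epsilon+(w_A^x-w_B^x)
\]
from $\rho(0)\geq 0$, and the same bound with $(w_B^x-w_A^x)$ from $\rho(1)\leq 0$. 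Combining these two inequalities produces exactly \eqref{endogenous - indifferent driver existence condition} with the absolute value.

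The main obstacle is the computation of $A(0)$ and $A(1)$, because $A$ is defined through the ODE solution and not through any closed-form expression; the argument hinges critically on the monotonicity established in Theorem \ref{concave property}, which allows the truncation $h$ to collapse to a constant on the whole interval $[c,r_t]$ at the two endpoints $z=0$ and $z=1$. Once those two values are identified, the remainder of the proof is a direct adaptation of the exogenous corollary's intermediate-value argument, with $\epsilon$ appearing as an additive relaxation on the right-hand side.
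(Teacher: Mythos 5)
Your proposal is correct and follows essentially the same route as the paper: reduce existence of an indifferent driver in $R$ to the unique root of $\rho(z)=\psi(r_t,z)+\tau+w_B^x-w_A^x+\epsilon-2\epsilon A(z)$ lying in $[0,1]$, then evaluate $\rho(0)\ge 0\ge\rho(1)$ using $A(0)=0$ and $A(1)=1$. You additionally spell out why $A(0)=0$ and $A(1)=1$ via the monotonicity in Theorem \ref{concave property}, a step the paper's proof asserts without justification.
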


    \begin{proof}
        The proof follows from Theorem \ref{endogenous optimal strategy} and \ref{concave property}, which indicates that an indifferent driver exists if and only if the initial value $z$ satisfies the conditions $0 \le z \le 1$ and equation \eqref{endogenous initial condition}. Denote
        \begin{align*}
            \rho(z) = \psi(r_t,z) + \tau + w_B^x - w_A^x + \epsilon - 2\epsilon A(z).
        \end{align*}
        Lemma \ref{monotonicity} and Lemma \ref{Area} ensure that $\rho(z)$ is continuous and strictly decreasing in $z$. {Therefore, an equivalent characterization is }
        \begin{align*}
            \rho(0) \ge 0 \ge \rho(1),
        \end{align*}
        where $A(0) = 0$ and $A(1) = 1$. Simplifying the preceding inequalities leads to inequality \eqref{endogenous - indifferent driver existence condition}.
    \end{proof}

    \begin{remark}\label{remark: selection of initial condition}
        In problem \eqref{IVP}, we could have chosen any other vertical slice $y(\cdot)$ as the initial condition rather than $y(r_t)$. However, $0 \le y(r_t) \le 1$ is the only ``tight'' condition for the existence of indifferent driver in $R$, due to the monotonicity of $g(r,z)$ shown in Theorem \ref{concave property}.
    \end{remark}
    
    
    Compared with inequality \eqref{exogenous - indifferent driver existence condition}, we observe that whenever there is an indifferent driver for exogenous model, there must be one for endogenous model as well. The reverse is not true. {The inequality \eqref{endogenous - indifferent driver existence condition} is more likely to be violated under the same four conditions discussed above for exogenous model. Furthermore, as $\epsilon$ decreases, the impact of congestion is reduced. Thus, the difference in exogenous waiting times, i.e., $|w_{A}^x - w_{B}^x|$, becomes more substantial and EV drivers are more likely to choose the same charging station with the lower exogenous waiting time. In Section \ref{numerical exogenous model} and \ref{numerical endogenous model}, we will numerically show the impact of parameters and compare the results of two waiting time models.}

\section{{Numerical Case Study for Exogenous Model}}\label{numerical exogenous model}

    {In this section, we use numerical case studies to showcase the optimal strategies that EVs with different locations and remaining battery levels adopt under the exogenous waiting time model}. Specifically, we examine how {charging rate functions}, {exogenous waiting times}, unit distance traveling time and power consumption over the unit distance impact the strategies employed by EVs. To illustrate the charging decisions, we depict the region $R$ as a rectangle in Cartesian coordinate system where $r$ is the horizontal axis and $y$ is the vertical axis. We plot the indifference curve that shows all the indifferent EV drivers. {According to Theorem \ref{exogenous optimal strategy}, the lower region below the curve contains all EVs that select Station A as their optimal strategies, while the upper region includes all EVs that choose Station B as optimal strategies.}

    \subsection{{The impact of charging rate characteristics}}
    We first examine the impact of charging rate characteristics as shown in Fig. \ref{different charging time functions}, where we set $c = 0.2$, $\tau = 1$, $r_{t} = 1$, $w_{A}^{x} = 0.5$ and $w_{B}^{x} = 0$. Fig. \ref{different charging time functions - charging rates} shows the (weakly) decreasing charging rate curves for each charging time function, while Fig. \ref{different charging time functions - indifference curve} shows the corresponding indifference curves. With an analytical expression of the charging time function, the indifference curve $g$ can be explicitly solved. {When charging rate is a constant, i.e., $F(r) = 1 - r$,} the indifference curve is a line parallel to $r$-axis. It divides the region $R$ into two parts around $y=0.29$. Here, we observe that fewer EVs use Station A than B. This is potentially due to the longer waiting time at Station A. Thus, some EVs close to Station A are willing to spend slightly increased traveling and charging times for an exchange of a much lower waiting time at Station B. 

    \begin{figure}[ht]
        \subfigure[Charging rate curves]{
            \begin{minipage}[t]{0.5\linewidth}
			\centering
			\includegraphics[width=\linewidth]{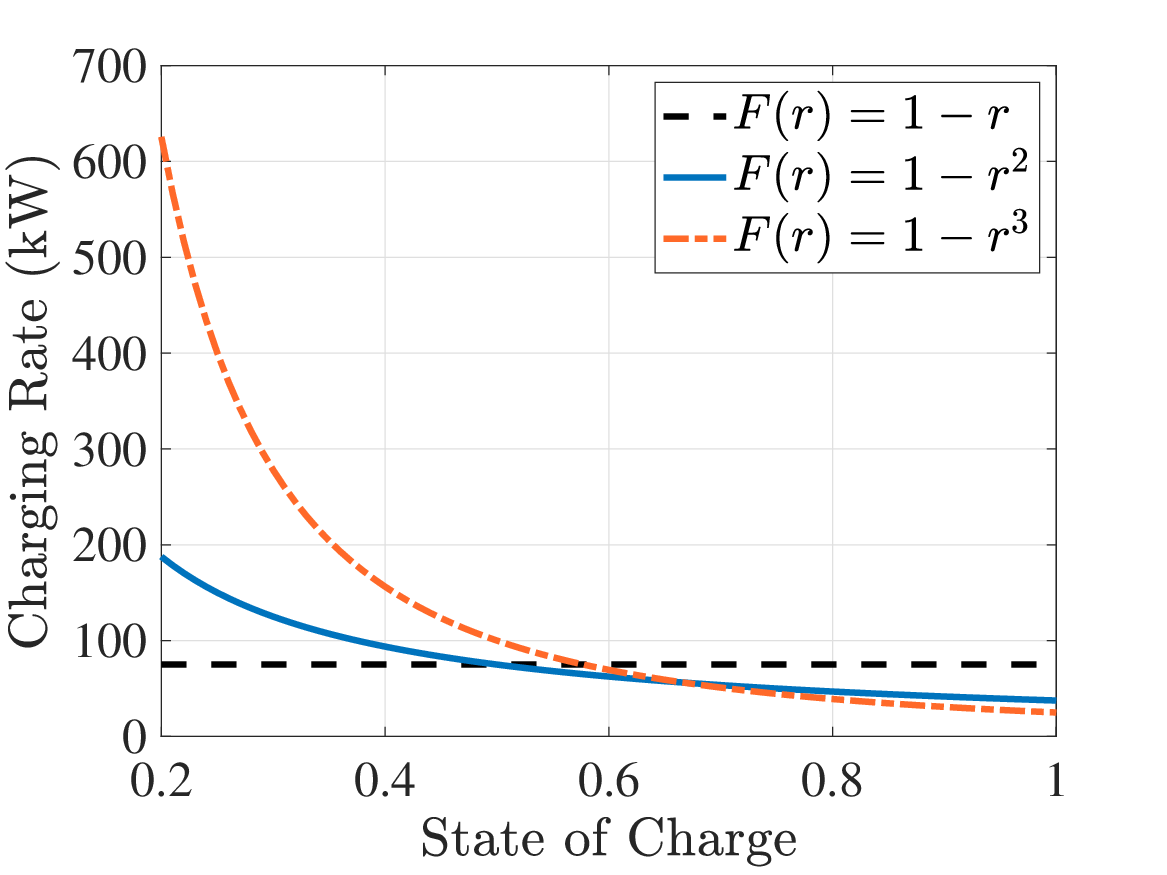}
			\label{different charging time functions - charging rates}
		\end{minipage}
        }%
        \subfigure[Indifference curves]{
            \begin{minipage}[t]{0.5\linewidth}
			\centering
			\includegraphics[width=\linewidth]{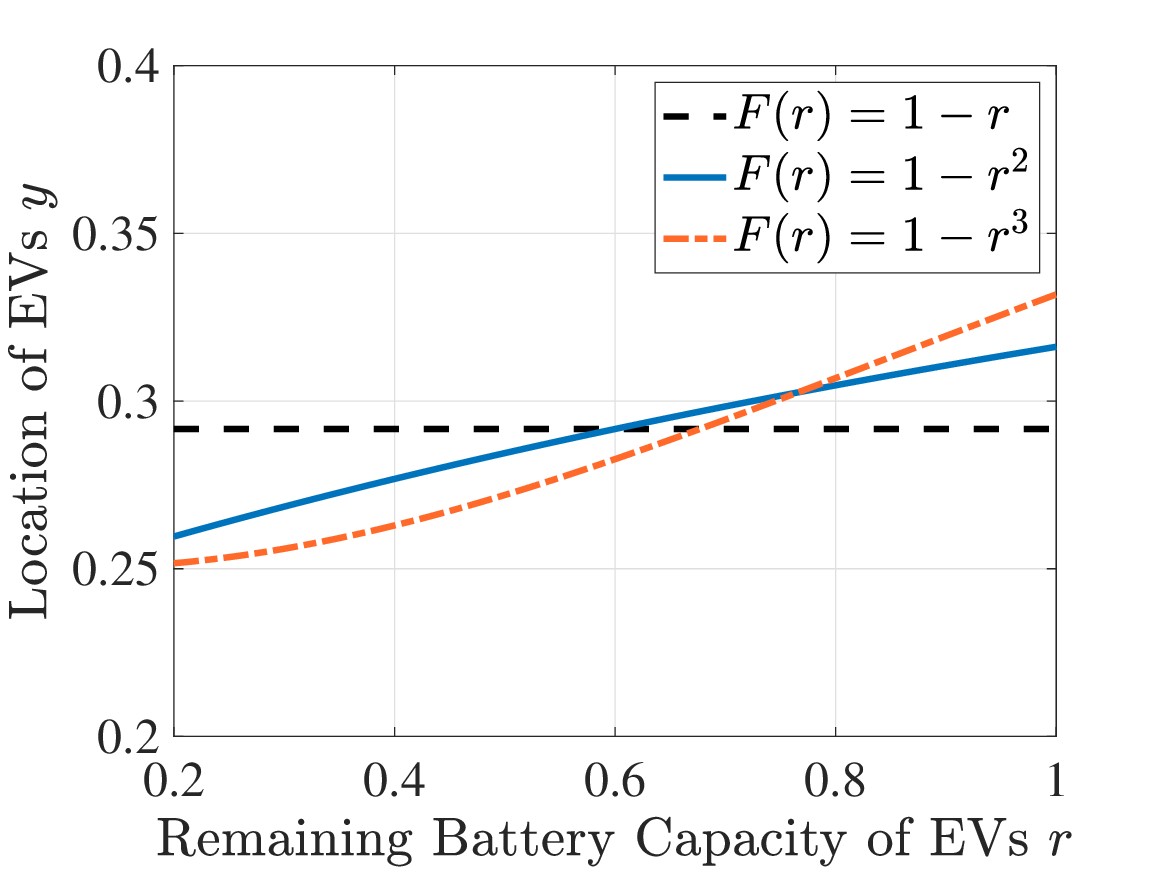}
			\label{different charging time functions - indifference curve}
		\end{minipage}
        }
        \caption{Comparison of charging rate curves and indifference curves {with different charging rate characteristics but the same average charging speed from $r=0$ to $r=1$.}}
        \label{different charging time functions}
    \end{figure} 
	
    However, {when charging rate is strictly decreasing, i.e., $F(r) = 1 - r^{2}$,} and other parameters remain the same, the indifference curve becomes an increasing function. For EVs located near the previous indifference curve with $y$ values close to $0.29$, since the charging speed is faster with low remaining battery levels, more EVs with small values of $r$ would be willing to travel a bit further and take Station B as their optimal strategies instead of Station A comparing to the linear charging time function case. What is happening here is that their charging time difference at two stations is much reduced due to the fast charging speed at low battery capacities. Meanwhile, the traveling times and waiting times for both stations remain the same as in the linear case. Therefore, such EVs tend to choose Station B instead. On the other hand, more EVs with large values of $r$ choose Station A as their optimal strategy rather than Station B. 
	This is due to the slower charging rate when the battery is nearly full. These vehicles opt for a closer charging station to balance the tradeoff between traveling time, charging time and waiting time. 
	
    {When charging rates change more dramatically with $r$, i.e., $F(r) = 1 - r^{3}$,} the aforementioned phenomena become more evident. Specifically, more EVs with small remaining battery levels opt for Station B, while more EVs with large remaining battery levels choose Station A, in comparison to both previously mentioned cases.

    It is worth noting that all charging time functions in this subsection take the same value of $F(0)$, meaning that the same amount of time is required to charge an EV from empty to full, regardless of the variation in instantaneous charging rates. As a result, the indifference curves of three cases are close to each other.


    \subsection{{The impact of average charging speed}}
    {Next, we further consider the impact of average charging speed that EV has to charge from empty to full, which can be reflected by $F(0)$.} The parameters are set as $c = 0.2$, $\tau = 1$, $r_{t} = 1$, $w_{A}^{x} = 0.5$ and $w_{B}^{x} = 0$. Fig. \ref{different charging times - charging rates} illustrates the charging rate curves and Fig. \ref{different charging times - indifference curve} shows the indifference curves. Specifically, the charging time functions $F(r) = 8(1-r^{2})$ and $F(r) = 4(1-r^{2})$ model the real-world scenario of using a level 2 charging, while the remaining cases correspond to a DC fast charging or supercharging \cite{charging rate at 90 percent}. As the overall charging process speeds up, the indifference curve shifts downwards and more EV drivers tend to choose Station B as their optimal strategy, regardless of the remaining battery levels. This is because the charging time difference between two stations decreases as the charging time is generally reduced. It results in more EVs opting to choose a station far away but with a significantly lower waiting time.

    \begin{figure}[ht]
        \subfigure[Charging rate curves]{
            \begin{minipage}[t]{0.5\linewidth}
			\centering
			\includegraphics[width=\linewidth]{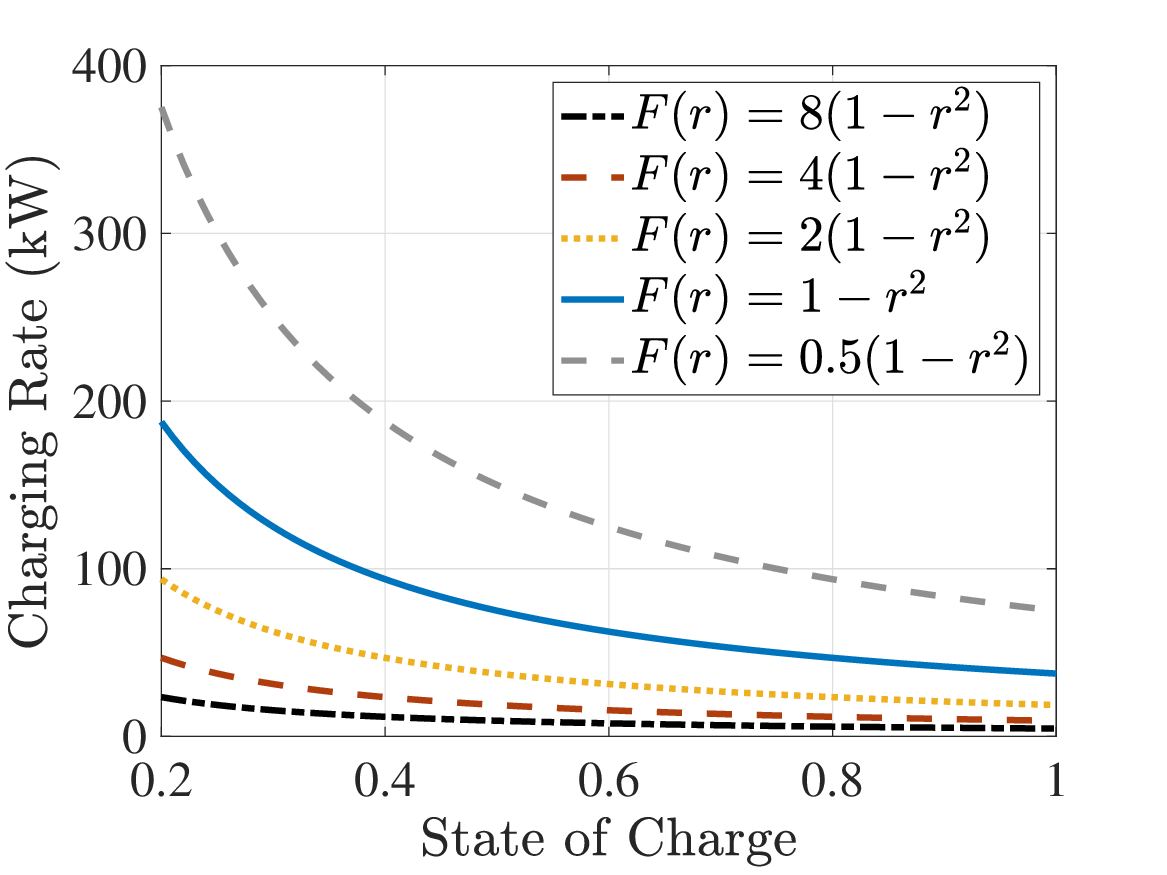}
			\label{different charging times - charging rates}
		\end{minipage}
        }%
        \subfigure[Indifference curves]{
            \begin{minipage}[t]{0.5\linewidth}
			\centering
			\includegraphics[width=\linewidth]{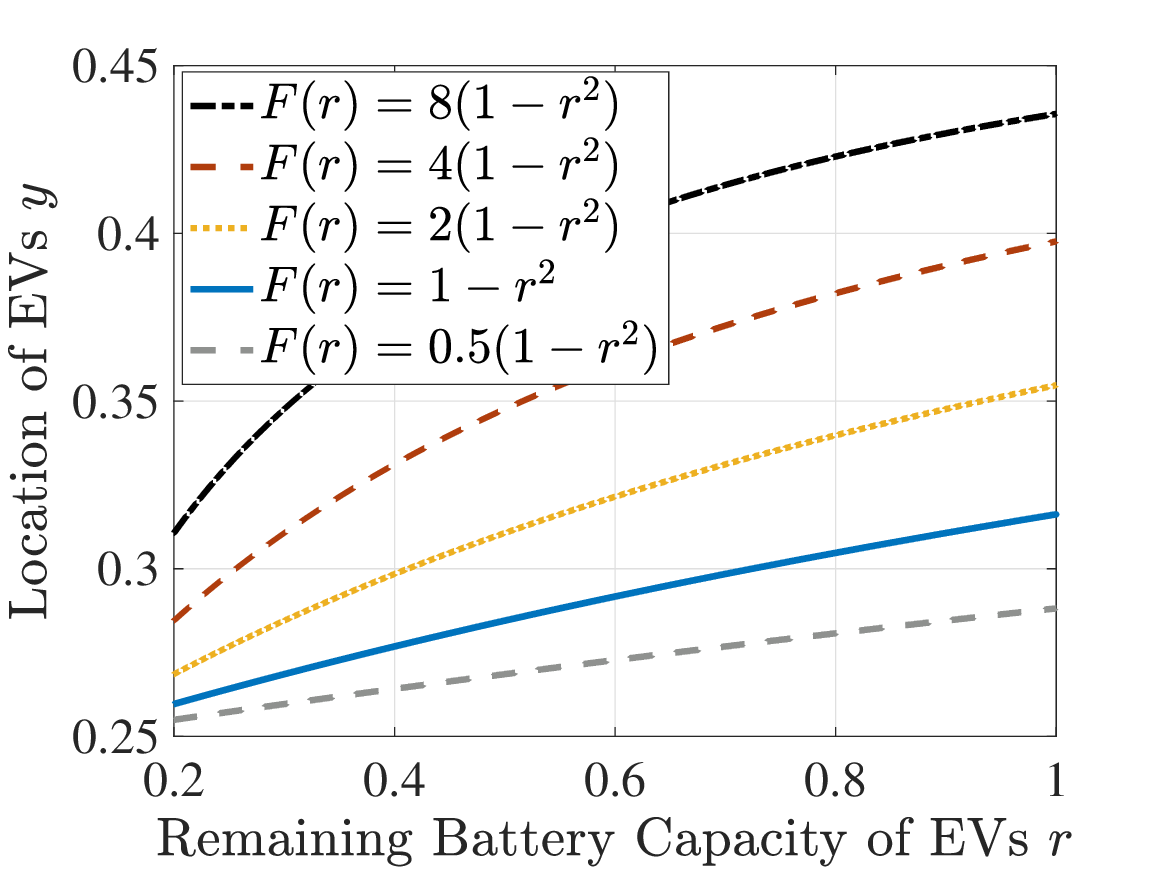}
			\label{different charging times - indifference curve}
		\end{minipage}
        }
        \caption{Comparison of charging rate curves and indifference curves {with different charging rate characteristics and different average charging speeds from $r=0$ to $r=1$.}}
        \label{different charging times}
    \end{figure}

    
    

    \subsection{{The impact of modeling parameters}}
    In this subsection, we investigate the impact of modeling parameters on the optimal strategies of EVs, including exogenous waiting times at charging stations, unit distance traveling time and power consumption over unit distance. The corresponding indifference curves are depicted in Fig. \ref{different parameters}. Before delving into the details, we set $r_{t} = 1$, $F(r) = 1 - r^{2}$ and $w_{B}^{x} = 0$ as unchanged parameters. In the first case, we assume $c = 0.2$, $\tau = 1$ and $w_{A}^{x} = 0.5$.

    \begin{figure}[ht]
    \begin{minipage}[b]{0.48\linewidth}
    \centering
    \includegraphics[width=\textwidth]{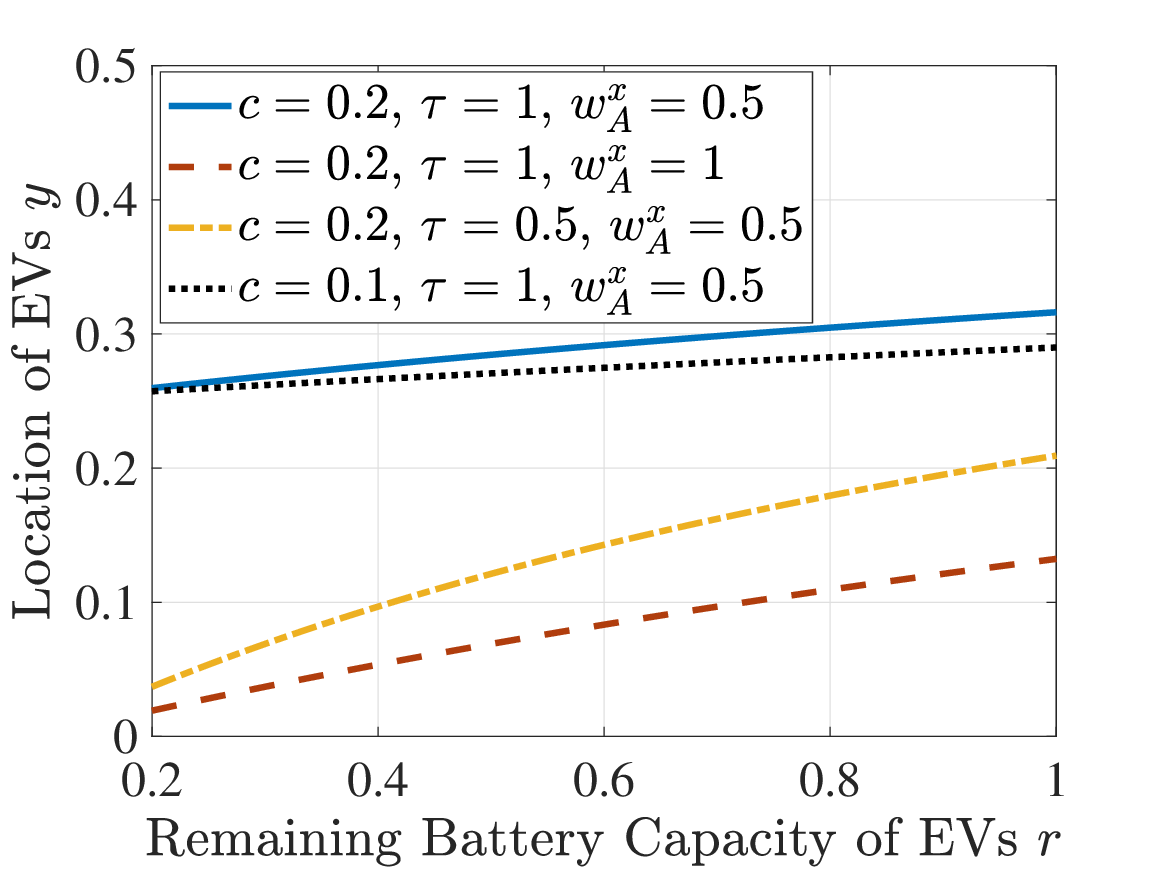}
    \caption{Indifference curves with different modeling parameters.}
    \label{different parameters}
    \end{minipage}
    \hspace{0.1cm}
    \begin{minipage}[b]{0.48\linewidth}
    \centering
    \includegraphics[width=\textwidth]{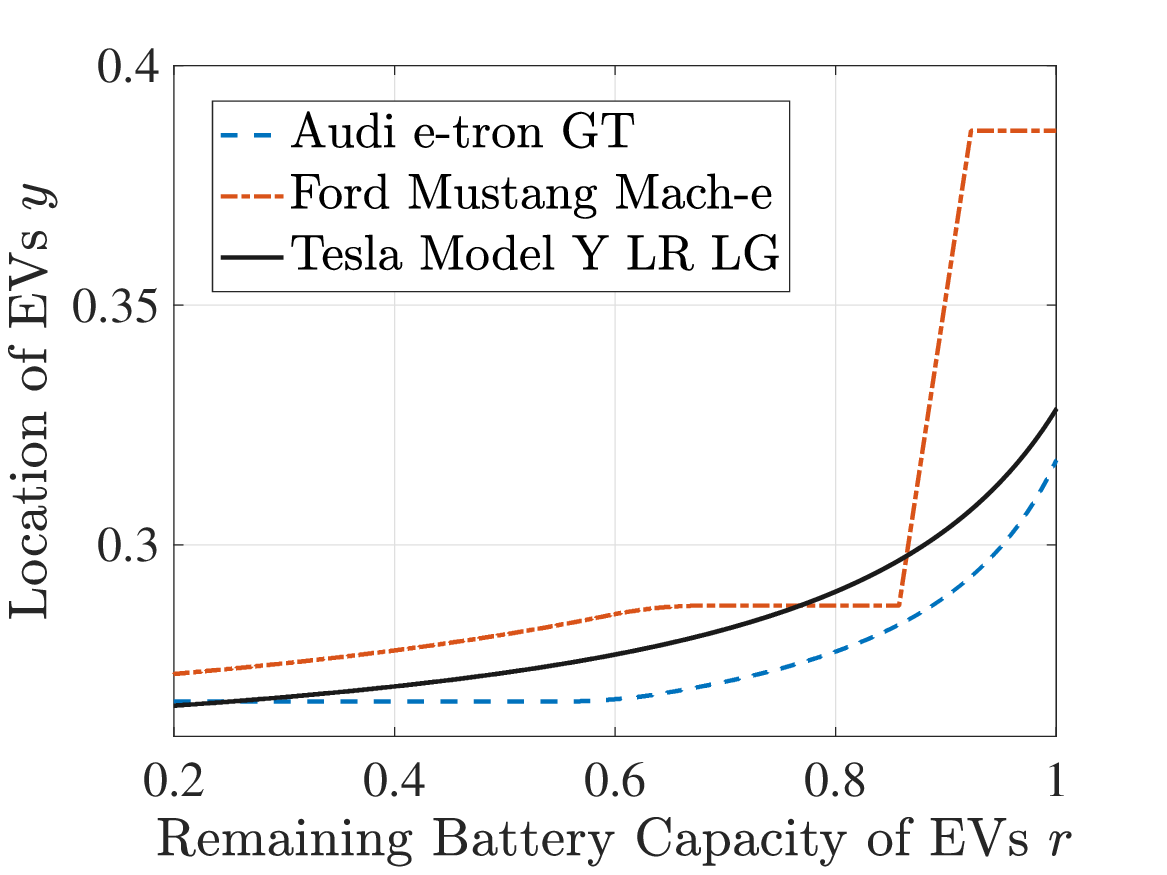}
    \caption{Indifference curves using field data of EV's charging rates.}
    \label{field data indifference curves}
    \end{minipage}
    \end{figure}

	
	Subsequently, we examine the scenario where the waiting time at Station A is longer, i.e., $w_{A}^{x} = 1$, while the other parameters remain the same. In this case, the indifference curve shifts downwards and gets closer to Station A, indicating that more EVs prefer to choose Station B as their optimal strategy. The reason for this can be attributed to the fact that when the waiting time at Station A increases, EV drivers who would have chosen Station A may opt for Station B instead, as the total time cost at Station A has increased. In the extreme case where $w_{A}^{x}$ is infinitely large, the optimal strategies for all EVs are to charge at Station B since they will never start charging at Station A. 
	
	Furthermore, we lower the unit distance traveling time $\tau$ to $0.5$ while keeping the other parameters the same as in the first case. As a result, the indifference curve shifts towards Station A. This can be explained by the fact that the importance of traveling time cost diminishes when the unit distance traveling time decreases. Therefore, some EV drivers who are restricted by the large traveling time and decide to choose Station A are no longer concerned about that and may choose Station B instead.
	
	Finally, we investigate the impact of power consumption over unit distance, $c$, by decreasing it to $0.1$ while keeping the other parameters the same as in the first case. As shown in Fig. \ref{different parameters}, the indifference curve slightly shifts downwards towards Station A. Thus, more EVs tend to choose Station B for charging. The reason is as follows. When $c$ is small, the power consumed by EVs during their travels to charging stations is also reduced. As a result, the remaining battery levels upon arrival at two charging stations become almost the same, regardless of their initial positions. For EVs on the indifference curve in the first case, they are indifferent to both choices even if the charging time at Station B is larger than that at Station A. Therefore, if the charging time difference is reduced or even eliminated, these previously indifferent drivers will choose Station B instead as their optimal strategy, causing the indifference curve to shift downwards when $c$ decreases.
	
	
    
	
	
    \subsection{{Field data test}}
    {Using the field data and charging rate curves shown in Fig. \ref{field data charging rates}, we plot the indifference curves in Fig. \ref{field data indifference curves} with the parameters of $c = 0.2$, $\tau = 1$, $r_{t} = 1$, $w_{A}^{x} = 0.5$ and $w_{B}^{x} = 0$.} As evidenced by the real-world data, the charging rates decrease as the charge level increases. Generally speaking, with lower remaining battery levels, more EV drivers should opt for a farther charging station in order to achieve a lower overall time cost. This finding is interesting because EV drivers with low battery levels may have charging anxiety. So psychologically, they would have a stronger willingness to opt for a closer charging station than those people with high battery levels. However, our findings show that things should be the opposite if they are rational people. Besides of that, the results also suggest that charging station selection of EV owners will be significantly influenced by their car's charging curve characteristics.
	
{

    \section{Numerical Case Study for Endogenous Model}\label{numerical endogenous model}
        
        In this section, we focus on endogenous model and compare our findings with those obtained from the exogenous model. First, we explore cases with explicit charging rate functions and present the optimal strategies for EV drivers. Here the charging rates are considered to be flat, decreasing and piecewise flat. Subsequently, we conduct a field data test based on the charging rates shown in Fig. \ref{field data charging rates}.

        \subsection{Flat charging rates}
        We first investigate the case of flat charging rates, where the charging time function is set as $F(r)=1-r$. This is the simplest case, as the charging speed is constant, regardless of the remaining battery levels. With the parameters set as $w_A^{x}=1$, $w_B^{x}=0$, $\tau=1$, $c=0.2$, $r_t=1$ and $\epsilon=1$, Fig. \ref{fig: flat charging rate} compares the indifference curve obtained using exogenous and endogenous models, denoted by $g^x$ and $g^n$, respectively. We observe that there is a large gap between two indifference curves, due to the presence of congestion in endogenous model.

        \begin{figure}[ht]
            \subfigure[Flat charging rates]{
                \begin{minipage}[t]{0.5\linewidth}
    			\centering
    			\includegraphics[width=\linewidth]{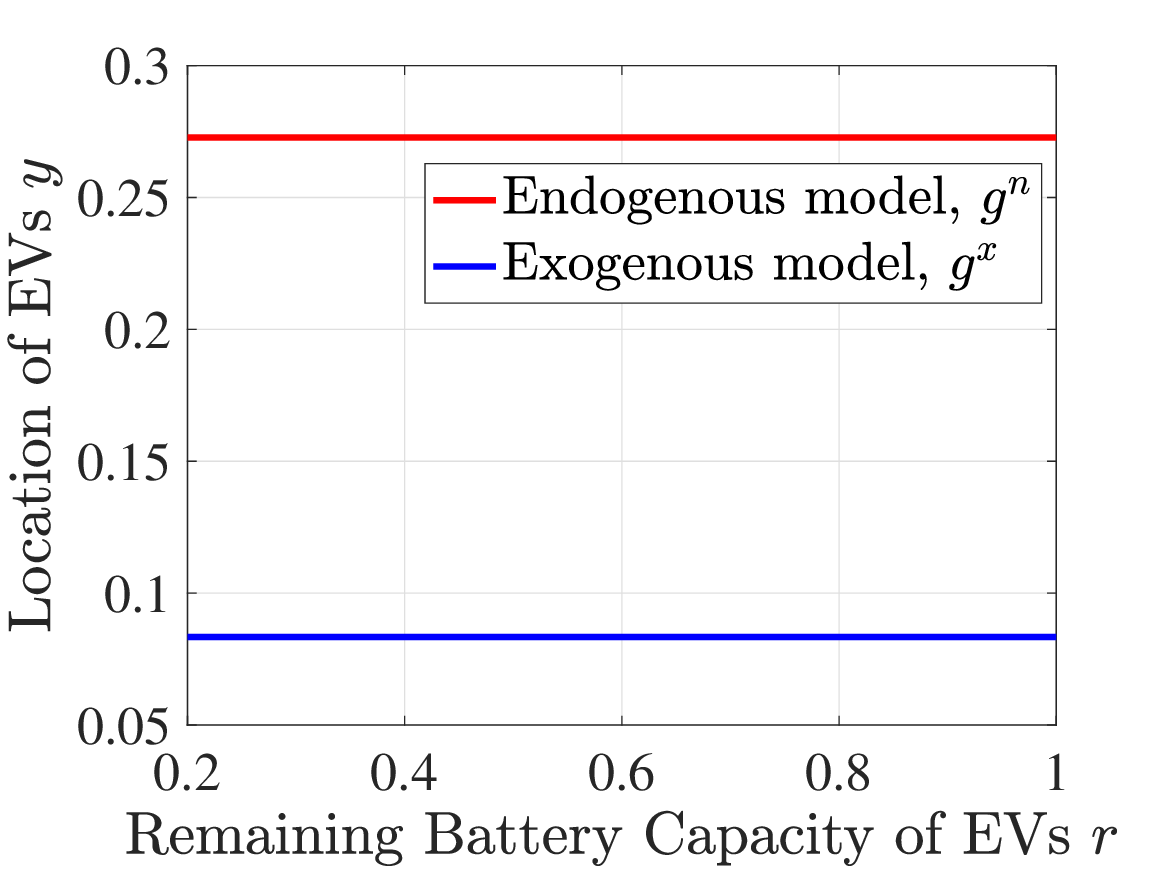}
    			\label{fig: flat charging rate}
    		\end{minipage}
            }%
            \subfigure[Decreasing charging rates]{
                \begin{minipage}[t]{0.5\linewidth}
    			\centering
    			\includegraphics[width=\linewidth]{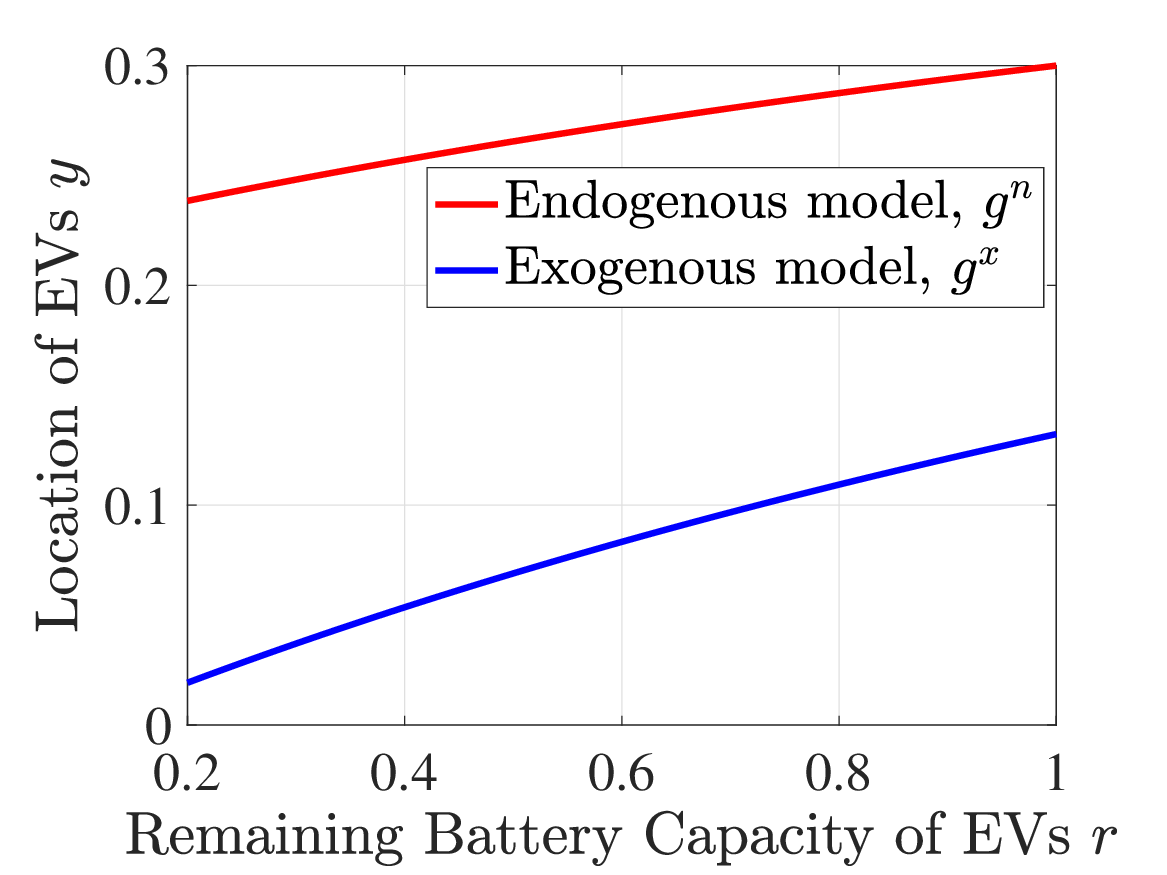}
    			\label{fig: decreasing charging rate}
    		\end{minipage}
            }
            \caption{Comparison of the indifference curves in exogenous and endogenous models.}
        \end{figure}

        \subsection{Decreasing charging rates}
        Next, we consider the decreasing charging rates scenario and assume the charging time function to be $F(r)=1-r^2$. With the same parameters, the indifference curve is shown and compared in Fig. \ref{fig: decreasing charging rate}. Similar to the first case, as the congestion at Station B is higher than A, more EV drivers choose Station A instead to avoid the congestion.

        \subsection{Piecewise flat charging rates}
        In this subsection, we study a more practical case where the charging rates are piecewise flat. It corresponds to the scenario where the charging rate significantly drops after the battery level exceeds a certain threshold. An example of such charging rate characteristic is Ford Mustang Mach-e, where the charging speed decreases once the battery level reaches $80\%$. For simplicity, we let the charging rate curves be piecewise flat, as shown in Fig. \ref{piecewise linear P}. With the same parameters, the resulting indifference curve is demonstrated in red in Fig. \ref{piecewise linear}. Meanwhile, the curve in blue is the indifference curve of exogenous model. 

        \begin{figure}[ht]
            \subfigure[Charging rate curves]{
                \begin{minipage}[t]{0.5\linewidth}
    			\centering
    			\includegraphics[width=\linewidth]{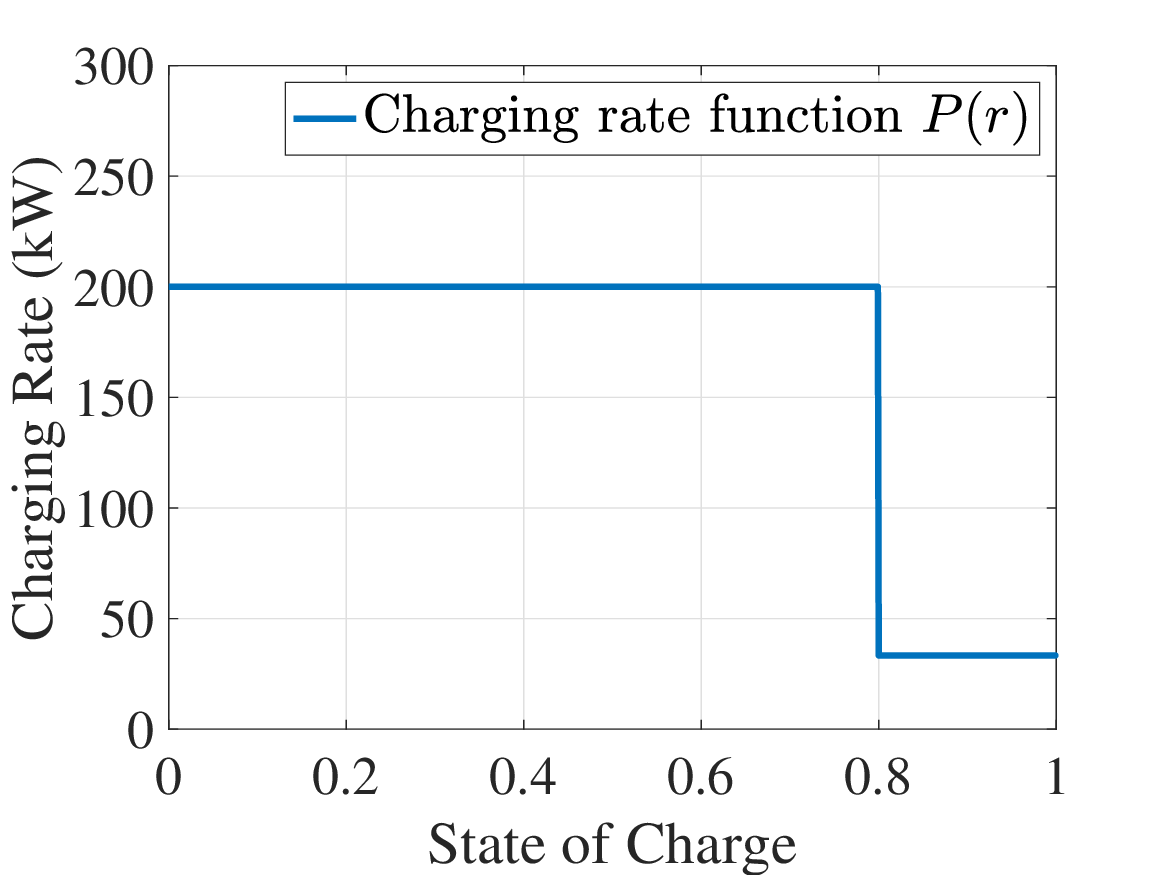}
    			\label{piecewise linear P}
    		\end{minipage}
            }%
            \subfigure[Indifference curves]{
                \begin{minipage}[t]{0.5\linewidth}
    			\centering
    			\includegraphics[width=\linewidth]{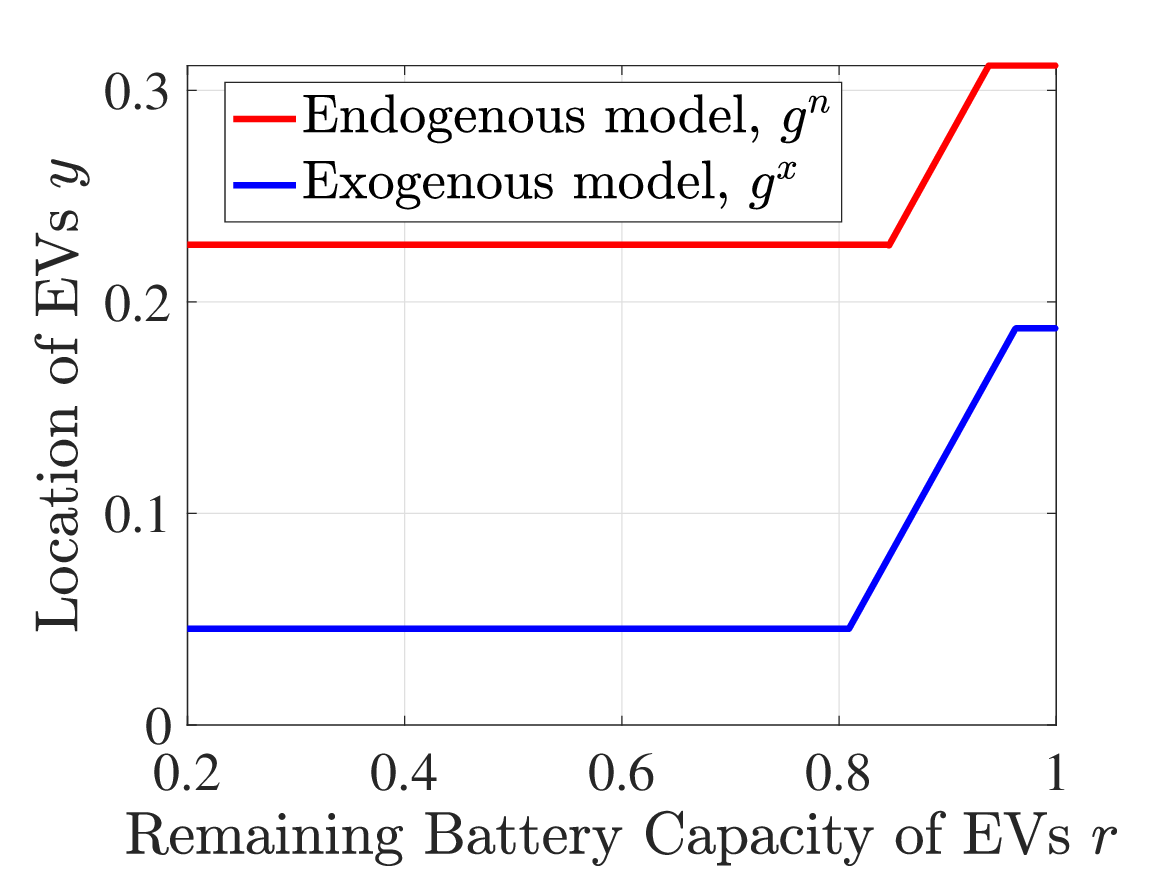}
    			\label{piecewise linear}
    		\end{minipage}
            }
            \caption{Charging rates and indifference curves in piecewise flat charging rates scenario.}
        \end{figure}

        Combining all scenarios together, we see that the inclusion of endogenous waiting times in the model reveals a significant difference in the indifference curves and the optimal strategies, compared to the exogenous results. More EVs choose not to wait at a more ``crowded'' charging station. Many of those who previously opt for Station B now switch to Station A. Although they suffer from a longer exogenous waiting time at Station A, the congestion time they experience is much reduced since only a few people may be waiting at that charging station. This shows the impact of considering endogenous model on EVs' charging station choices.

        \subsection{Field data test}
        Finally, we present the results of testing on real-world data. To perform this experiment, the parameters are set as $c = 0.2$, $\tau = 1$, $r_{t} = 1$, $w_{A}^{x} = 0.5$, $w_{B}^{x} = 0$ and $\epsilon=1$. The charging rates used in this experiment are the same as those shown in Fig. \ref{field data charging rates}. The solutions $g(r,z)$, or say indifference curves, are plotted in Fig. \ref{field data endogenous}, and the results are compared to those obtained using the exogenous model. 
        
        We find that many EVs change their optimal strategies from Station B to Station A since the congestion at Station B is larger than that at Station A. The values of $\alpha_\text{Audi}$, $\alpha_\text{Ford}$ and $\alpha_\text{Tesla}$, which are the proportions of EVs of each make and model that choose Station A as their optimal strategies, are calculated to be 0.3814, 0.3876 and 0.3837, respectively.


        \begin{figure}[ht]
        \begin{minipage}[b]{0.48\linewidth}
        \centering
        \includegraphics[width=\textwidth]{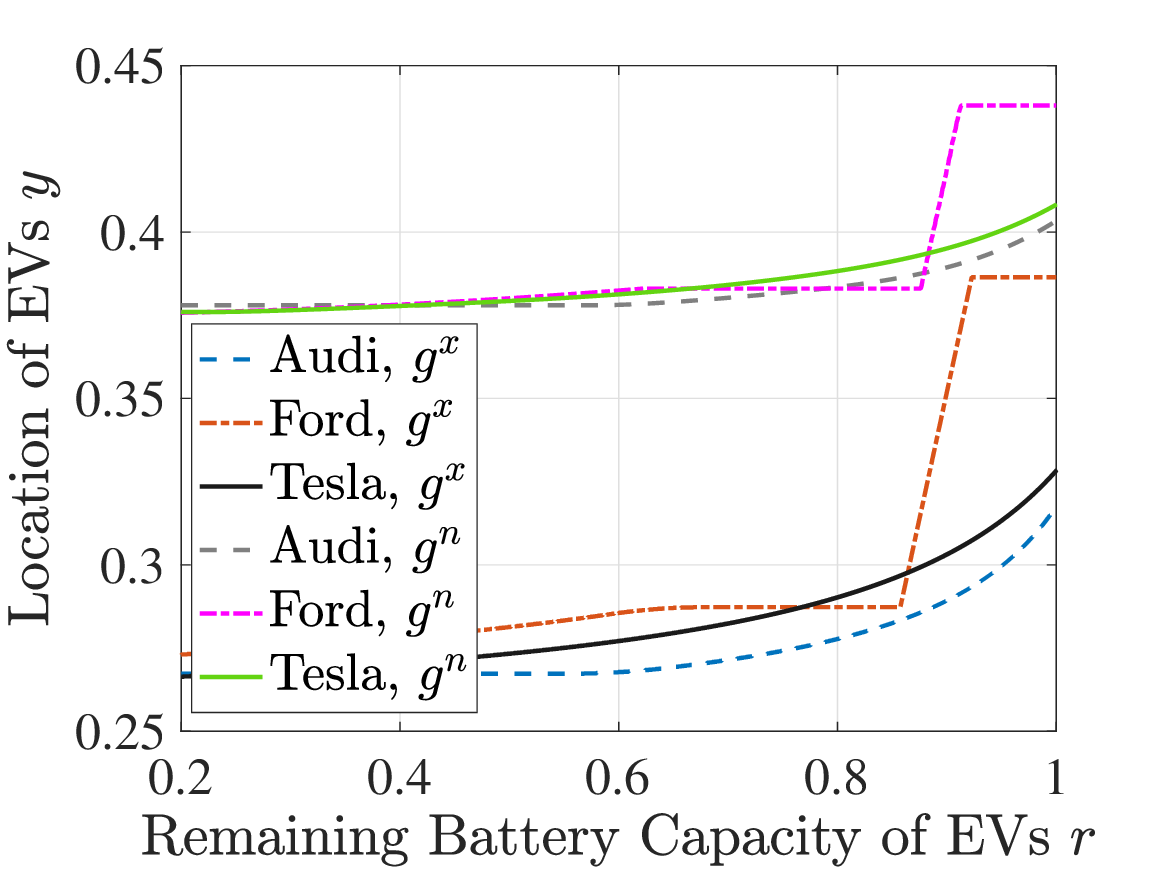}
        \caption{Field data test on endogenous waiting time model.}
        \label{field data endogenous}
        \end{minipage}
        \hspace{0.1cm}
        \begin{minipage}[b]{0.48\linewidth}
        \centering
        \includegraphics[width=\textwidth]{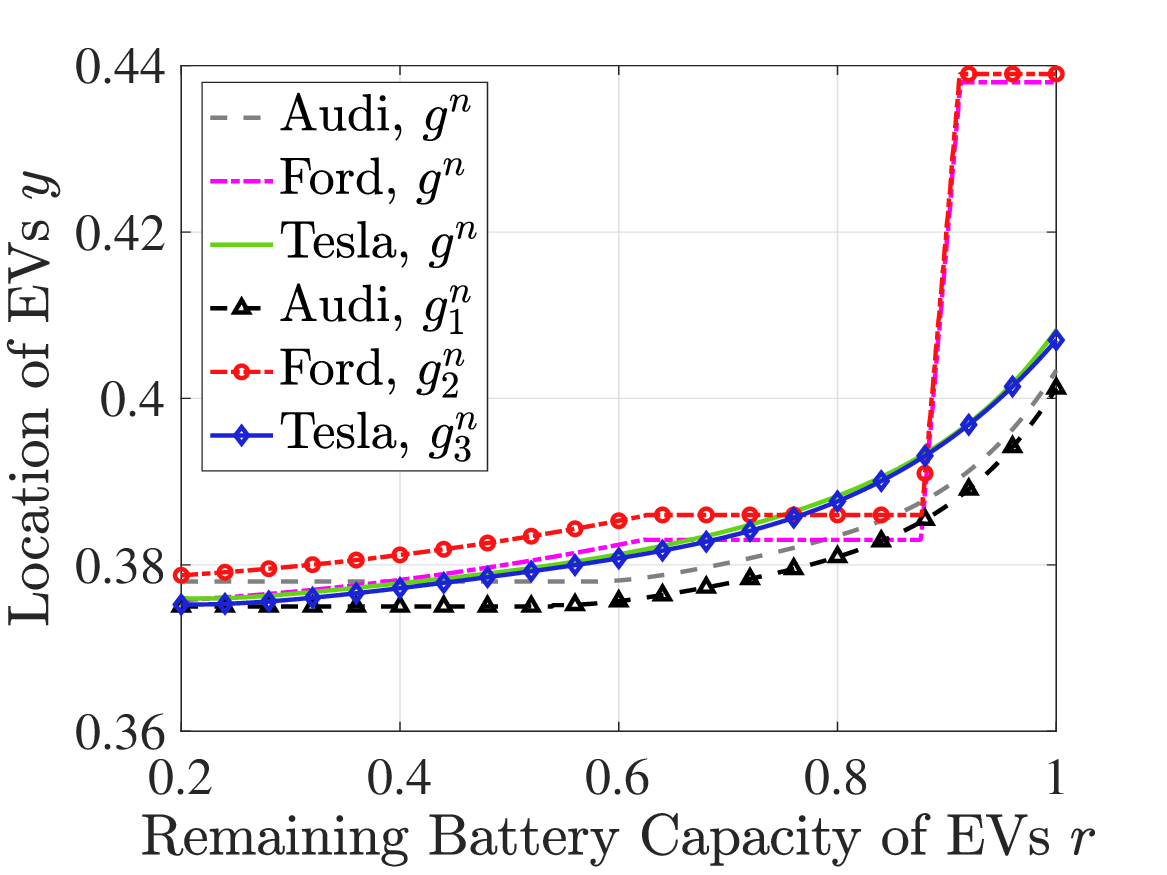}
        \caption{Field data test on heterogeneous model.}
        \label{field data heterogeneous}
        \end{minipage}
        \end{figure}

\section{Case Study on Heterogeneous Model}\label{heterogeneous}
    In this section, we discuss the heterogeneous model, where different EVs may possess distinct charging rate characteristics. From the previous sections, we have seen that different charging rate characteristics lead to different indifference curves and thus different optimal strategies. Therefore, we are interested in the case when EVs are mixed, for example, one half drives Tesla and the other half drives Audi, and how their strategies may be affected by each other.
    
    To effectively model the behavior of these diverse EVs, we classify them into multiple classes, say $m$ classes, based on their charging rate characteristics. Denote the charging rate and time functions by $P_1,\dots,P_m$ and $F_1,\dots,F_m$, respectively. Additionally, we assign a normalized weight to each class, denoted by $W_1,\dots,W_m$, reflecting the number of EVs in each class. Note that all weights add up to 1, i.e., $\sum_{i=1}^m W_i=1$. 
    


    In exogenous waiting time model, each class of EVs has its own initial value problem and associated solution, which means that the optimal strategies for each class of EVs are not affected by other classes of users. 

    However, in endogenous waiting time model, the solutions to the initial value problem and optimal strategies for each class of EVs are coupled with the other classes. This means that the behavior of one class affects the behavior of other classes, leading to changes in their charging strategies and resulting in a more complex decision making process. Specifically, when considering the endogenous model with multiple classes of EVs, the interaction between the classes can lead to new patterns of behavior. For example, some classes may prioritize charging at one station over the other one, leading to high congestion. Mathematically, the problem can be formulated as solving a system of first order ordinary differential equations with an initial value that combines the weighted solutions of each ODE. {Note that our theory only works for $m=1$, but the approach is general and can be extended to the case of $m>1$. We leave the technical details as future work and present an experimental result here.}

        
    
{In $m>1$ setup, $A$ is defined as the weighted sum of  integral of ODE solutions $g_1, \dots, g_m$ intersecting region $R$. With appropriate initial conditions, $A$ coincides with the congestion $\alpha$. Numerically, we solve it by iteratively altering the initial conditions so that the difference between $A$ and $\alpha$ is minimized.} To demonstrate how the heterogeneous model differs from the single-class homogeneous model, we use the same field data as before, with the parameters being $c = 0.2$, $\tau = 1$, $r_{t} = 1$, $w_{A}^{x} = 0.5$, $w_{B}^{x} = 0$ and $\epsilon=1$. In this case, we assume that the EVs consist of Audi, Ford and Tesla users, with the same weight for each class, i.e., $W_1=W_2=W_3=1/3$.
    
    The charging rate functions are shown in Fig. \ref{field data charging rates}. The indifference curves are illustrated in Fig. \ref{field data heterogeneous}, where $g_1^{n}$, $g_2^{n}$ and $g_3^{n}$ represent the indifference curves for each class. We also compare these results with the single-class case using the endogenous waiting time model, as shown by $g^{n}$. In this experiment, for each class we calculate
    \begin{align*}
        \beta_i = \frac{\text{The number of EVs in class $i$ that choose A}}{\text{The total number of EVs in class $i$}},
    \end{align*}
    and obtain $\beta_1=0.3787$, $\beta_2=0.3903$ and $\beta_3=0.3832$.
        
    After accounting for the mutual effects between different classes of EVs, we observe that more Audi users opt for Station B while more Ford users opt for Station A. This is because Audi users generally have faster charging rates, making their charging time difference at two stations relatively small. Therefore, they are more likely to travel a bit further to tradeoff their charging time and waiting time. As a consequence, the indifference curve shifts downwards. In contrast, Ford users have slower charging rates and are more likely to choose a closer station. Moreover, by taking Audi users as an example, we find that the total time cost of indifferent drivers in heterogeneous case (plotted as Audi, $g_1^n$) is slightly increased compared to homogeneous case (plotted as Audi, $g^n$). The reason is due to the increase in congestion level at Station A. On the other hand, the indifferent drivers in Ford class, who have slower charging rates, get a slightly reduced total time cost when Audi and Tesla users participate. Therefore, as the EV users with the slowest charging rates, they actually benefit from those EVs with faster charging rates. This highlights the externalities EV drivers impose on each other.


}

\section{Conclusion}\label{conclusion}
{
    We study the charging station selection problem  with the consideration of varying charging rates. Our work introduces an ODE approach to derive the optimal strategies for EV drivers considering both exogenous and endogenous models. In the exogenous model, each EV has a fixed, predetermined waiting time that is not influenced by the behavior of other EVs. On the other hand, the endogenous model takes into account the impact of other EVs, where an EV's waiting time is affected by the charging decisions of others. We find that even if EVs have the same location, their optimal strategies might be opposite depending on the remaining battery capacities. The EVs in the region $R$ could be divided into two groups choosing different charging stations. {We use a unified ODE framework to solve the extended indifference curve for both exogenous and endogenous models, and then give the optimal strategies for EVs.} Finally, we demonstrate the optimal strategies by numerical simulations and field data tests. 

    Overall, our study provides a practical solution for EV users to optimally make decisions on charging station selection. In future work, we plan to further refine our model by incorporating more complex user behaviors, such as irrational behaviors, to better capture real-world charging scenarios.
    
    }



\ifCLASSOPTIONcaptionsoff
  \newpage
\fi

\end{document}